\documentclass[12pt]{article}

\usepackage{iftex}
\ifLuaTeX
  \usepackage[haranoaji]{luatexja-preset}
\fi

\usepackage{graphicx}
\usepackage{amsmath,amssymb,amsfonts}
\usepackage{mathabx}
\usepackage{ascmac}
\usepackage{arydshln}
\usepackage{booktabs}
\usepackage{graphicx}
\usepackage{algorithm}
\usepackage{algpseudocode}
\usepackage{mathrsfs}
\usepackage{stmaryrd}
\usepackage{lipsum}
\usepackage{titling}
\usepackage{comment}
\usepackage[breaklinks=true]{hyperref}
\usepackage{breakcites}
\usepackage{algorithmicx}
\usepackage{arydshln}
\usepackage{algpseudocode}
\usepackage{algorithm}
\usepackage{latexsym}
\usepackage{cases}
\usepackage{url}
\usepackage{pifont}
\usepackage{subcaption} 
\usepackage{makecell}
\usepackage{multirow}
\usepackage{setspace}
\usepackage{arydshln}
\usepackage{tikz}
\usetikzlibrary{patterns}
\usetikzlibrary{patterns.meta}
\usepackage{here}
\usepackage{natbib}
\usepackage[english]{babel}
\usepackage{threeparttable}
\usepackage[a4paper]{geometry}
\usepackage{hyperref}
\usepackage{amsthm}
\hypersetup{
bookmarkstype=none,
colorlinks,
linkcolor=blue,
citecolor=blue}
\usepackage{prettyref}
\usepackage{natbib}
\usepackage{threeparttable}
\allowdisplaybreaks[3]

\tikzstyle{rednode} = [shape=rectangle, fill=red!50, line width=3]
\tikzstyle{bluenode} = [shape=rectangle, fill=blue!50, line width=3]
\tikzstyle{yellownode} = [shape=rectangle, fill=yellow!50, line width=3]
\tikzstyle{dotnode} = [dashed, pattern={Lines[angle=90,distance=3pt]}, pattern color=gray!150]
\tikzstyle{backslashnode} = [dashed, pattern={Lines[angle=45,distance=3pt]}, pattern color=gray!150]
\tikzstyle{slashnode} = [dashed, pattern={Lines[angle=-45,distance=3pt]}, pattern color=gray!150]

\newtheorem{theorem}{Theorem}[section]
\newtheorem{proposition}[theorem]{Proposition}
\newtheorem{lemma}[theorem]{Lemma}

\theoremstyle{definition}
\newtheorem{definition}[theorem]{Definition}
\newtheorem{remark}[theorem]{Remark}
\newtheorem{example}[theorem]{Example}

\DeclareMathOperator{\Var}{Var}

\newcommand{\E}[1]{\mathbb{E}\left[#1\right]}
\newcommand{\Prob}[1]{\mathbb{P}\left[#1\right]}

\newcommand{\argmin}{\operatorname*{arg\,min}}

\newcommand{\BI}{\mathbf{1}}

\newcommand{\R}{\mathbb{R}}

\newcommand{\SReg}{\mathrm{SReg}}
\newcommand{\Reg}{\mathrm{Reg}}
\newcommand{\RReg}{\mathrm{RReg}}
\newcommand{\MMR}{\mathrm{MR}}
\newcommand{\WMB}{\mathrm{WMB}}

\newcommand{\safeincludegraphics}[2][]{%
  \IfFileExists{#2}{\includegraphics[#1]{#2}}{%
    \fbox{\parbox[c][0.22\textheight][c]{0.88\linewidth}{\centering Missing figure file\\[3pt]\texttt{\detokenize{#2}}}}%
  }%
}

\makeatother

\begin{document}

\setstretch{1.25} 
\title{
Prior-Free Sample Size Design\\ for Test-and-Roll Experiments\thanks{The authors gratefully acknowledge financial support from JSPS KAKENHI Grant (number 24K16342).}
} 

\author{Kentaro Kawato\thanks{Graduate School of Economics, The University of Tokyo, 7-3-1 Hongo, Bunkyo-ku, Tokyo 113-0033, Japan. Email:~kawato-kentaro380@g.ecc.u-tokyo.ac.jp.} \  and \ Shosei Sakaguchi\thanks{Faculty of Economics, The University of Tokyo, 7-3-1 Hongo, Bunkyo-ku, Tokyo 113-0033, Japan. Email:~sakaguchi@e.u-tokyo.ac.jp.}}

\date{\today}

\vspace{-0.8cm}

\maketitle

\begin{abstract}
This paper studies sample-size design for finite-population test-and-roll experiments, where a decision-maker first conducts an experiment on \(m\) units and then assigns the remaining \(N-m\) units to the treatment that performs better in the experiment. 
We consider welfare-aware sample-size choice, which involves an exploration--exploitation tradeoff: larger experiments improve the rollout decision but impose welfare losses on experimental units assigned to the inferior treatment. 
We show that the standard absolute minimax regret criterion can lead to implausibly small experiments by over-penalizing exploration in its worst-case objective. 
To address this limitation, we propose the Worst-case Marginal Benefit (WMB) rule, which compares the worst-case marginal benefit of adding one more matched pair to the experiment with the corresponding marginal exploration cost.
We establish a simple rule-of-thirds benchmark. For Bernoulli outcomes, after excluding pathological cases, the WMB criterion yields the optimal sample size of \(m\approx N/3\) through a Gaussian approximation. For Gaussian outcomes with a known common variance, the same benchmark arises exactly. These results provide a prior-free and practically implementable guide for welfare-based sample-size design.

\bigskip
\medskip

\begin{spacing}{1.2}
\noindent
\textbf{Keywords:} Experimental design; test-and-roll experiments; sample-size design; minimax regret; exploration--exploitation tradeoff. \\
\end{spacing}

\end{abstract}

\setstretch{1.5}

\newpage

\section{Introduction}

Randomized controlled trials (RCTs) are the gold standard in causal inference, but choosing the experimental sample size remains a central theoretical and practical challenge. Standard approaches based on Type I and Type II error control have little consequence for the welfare objectives of the decision-maker. This paper studies sample-size design from a welfare perspective, focusing on the exploration--exploitation tradeoff that arises when an RCT is conducted within a fixed finite population.

We consider a ``test-and-roll'' setting with a finite population of size $N$. A decision-maker first conducts an RCT of size $m$, assigning half of the experimental units to treatment and half to control. Based on the observed outcomes, the decision-maker assigns the remaining $N-m$ units to the treatment that performed better in the experiment. The objective is to maximize expected welfare over the entire finite population,
$\mathbb{E}\left[\sum_{i=1}^N Y_i\right]$, where the $N$ units include both the experimental and rollout units. The key design question is how large the experimental sample size $m$ should be.

This problem involves a fundamental exploration--exploitation tradeoff. A larger experiment improves the accuracy of the rollout decision, reducing the risk of assigning the remaining population to the inferior treatment. At the same time, because half of the experimental units receive the inferior treatment ex post, increasing the experimental sample size directly imposes welfare losses on the experimental population. Thus, excessive exploration is costly, while insufficient exploration can lead to an erroneous rollout decision and welfare losses for the rollout population.

This tradeoff arises in many applications, including clinical trials comparing alternative medical treatments \citep[e.g.,][]{wang2013clopidogrel}, test-and-roll A/B testing in digital marketing \citep{feit2019test_and_roll}, and field experiments on workplace incentives \citep{friebel2017team}.

A central challenge in optimizing this tradeoff is that the true treatment effects are unknown ex ante. Existing approaches to experimental sample-size design typically address this uncertainty by placing Bayesian prior distributions on the unknown parameters. Such priors can be difficult to specify properly, and the resulting design can be sensitive to the chosen specification. This difficulty is especially acute in our setting because the experimental sample size must be chosen before any data are observed. Thus, a practically useful design rule should provide guidance before learning takes place, without relying on subjective priors or arbitrary hyperparameters.

We address this challenge through a prior-free minimax analysis of the experimental sample-size choice. A natural starting point is the standard absolute minimax regret criterion \cite[e.g.,][]{stoye2009minimax_regret_finite_samples}, which evaluates the worst-case welfare loss relative to the infeasible oracle decision. Although Bayesian test-and-roll designs have been studied \citep[e.g.,][]{feit2019test_and_roll}, the implications of this purely minimax benchmark are less well understood. We show that, in finite-population test-and-roll problems, the standard absolute minimax regret criterion can lead to implausibly small experimental sample. This issue arises because the criterion is overly sensitive to extreme cases. The worst-case adversary concentrates on parameter values that maximize the absolute treatment effect, thereby over-penalizing exploration.

To address this limitation, we propose a new robust criterion for sample-size design: the \emph{Worst-case Marginal Benefit} (WMB) rule. The key idea is to evaluate experimentation at the margin rather than through worst-case total regret. The WMB rule compares the worst-case marginal benefit of adding one more matched pair to the experiment with the corresponding marginal exploration cost, and stops experimentation once the former no longer exceeds the latter. We study the optimal experimental sample size under this criterion for binary and Gaussian outcomes.

This paper establishes a simple rule-of-thirds benchmark. For Bernoulli outcomes, after excluding a boundary rare-event pathology, we show that the WMB criterion yields the optimal experimental sample size, $m^{\ast}\approx N/3$, through a Gaussian approximation. For Gaussian outcomes with a known common variance, the same benchmark arises exactly. Because this benchmark depends only on the total population size $N$, it can be implemented directly in practice and provides a prior-free guide for sample-size design without requiring subjective priors, prior variances, or other tuning parameters.

\subsection{Related Literature}

This paper contributes to the statistical literature on sample-size selection and experimental design. Conventional sample-size calculations are typically based on power analysis, which chooses the experimental sample size to control Type I and Type II error probabilities \citep[e.g.,][]{Lachin1981, DufloGlennersterKremer2007}. While useful for hypothesis testing, this approach does not directly account for the welfare consequences of experimentation. 
Our paper departs from the classical power-based framework by choosing the experimental sample size to optimize a finite-population welfare objective that accounts for both the welfare cost imposed on experimental units and the welfare consequences of the rollout decision.

Several studies examine optimal sample-size selection from a welfare-based perspective, typically using Bayesian approaches to test-and-roll designs. \citet{cheng2003choosing_sample_size_decision_analysis} study the optimal first-stage sample size for a finite population under Beta priors, and \citet{stallard2017optimal_sample_size_population_size} extend this decision-theoretic framework to more general models. \citet{feit2019test_and_roll} formulate A/B testing as a finite-population test-and-roll problem and derive optimal sample sizes under normal priors. These papers share our welfare-based perspective, but their methods depend on prior distributions or parametric assumptions. By contrast, we take a prior-free approach by studying minimax criteria and proposing the Worst-case Marginal Benefit rule.

Our work is also related to the literature on statistical treatment choice. Starting with \citet{manski2004statistical_treatment_rules}, this literature studies how experimental or observational data should be used to choose treatments so as to maximize welfare; see also \citet{stoye2009minimax_regret_finite_samples}, \citet{kitagawa2018who_should_be_treated}, and \citet{athey2021policy_learning}. These studies typically take the sample size as given and focus on the treatment assignment rule after data have been collected. In contrast, we study an ex ante design problem in which the decision-maker must choose the experimental sample size before observing any data. Thus, our analysis complements the statistical treatment choice literature by incorporating the cost of experimentation into the sample-size decision itself.

Our framework is also related to the multi-armed bandit literature, especially explore-then-commit strategies \citep{garivier2016etc, perchet2016batched, LattimoreSzepesvari2020BanditAlgorithms}. While bandit algorithms address exploration--exploitation tradeoffs through sequential treatment assignment, such adjustment is often infeasible in social-science, medical, and marketing experiments because outcomes are delayed, implementation is costly, or frequent reassignment raises ethical concerns. We therefore focus on test-and-roll designs, where the experimental sample size must be fixed in advance.

Our paper also relates to recent work incorporating economic and welfare considerations into experimental design \citep{azevedo2020ab_testing_fat_tails, azevedo2023ab_testing_gaussian_priors,narita2021incorporating, hu2024minimax_regret_sample_selection}. Relative to these studies, our contribution is to isolate the finite-population test-and-roll sample-size problem and derive a simple prior-free benchmark based on a marginal minimax criterion.

\subsection{Structure of the Paper}

The remainder of the paper is organized as follows. Section~\ref{sec:minimax-regret} formulates the Bernoulli test-and-roll problem and analyzes absolute minimax regret. Section~\ref{sec:wmb} introduces the WMB rule. Section~\ref{sec:simulation} examines its finite-sample behavior through numerical examples. Section~\ref{sec:theoretical-results} establishes the rule-of-thirds benchmark. Section~\ref{sec:gaussian-known-variance} extends the analysis to Gaussian outcomes with a common variance. Section~\ref{sec:conclusion} concludes. Proofs and additional discussions are provided in the appendix.

\section{The Minimax Regret Criterion and Its Limitations}\label{sec:minimax-regret}

This section introduces the finite-population test-and-roll problem with binary outcomes. We first define the test-and-roll experiment and introduce the empirical success rule of \citet{stoye2009minimax_regret_finite_samples}, which is minimax-regret optimal for the corresponding superpopulation problem under a matched-pairs design. We then define a finite-population regret criterion for sample-size choice and show why the standard absolute minimax regret objective can recommend implausibly small experiments.

\subsection{Setup and Treatment Rule}

Suppose that $N$ is even and that, for each individual $i=1,\dots,N$, the potential-outcome pair $(Y_i(1),Y_i(0))$ is drawn i.i.d. across individuals with Bernoulli marginals:
$$Y_i(1)\sim \mathrm{Bernoulli}(\mu_1), \qquad Y_i(0)\sim \mathrm{Bernoulli}(\mu_0).$$

Let $T_i\in\{0,1\}$ denote the treatment assignment. The observed outcome is $Y_i=T_iY_i(1)+(1-T_i)Y_i(0)$.
We write
$$
\tau:=\mu_1-\mu_0
$$
for the average treatment effect.

Suppose that the first $m$ individuals are assigned to the experiment, where $m$ is even. We consider a matched-pairs design in which $m/2$ individuals are assigned to treatment and the remaining $m/2$ individuals are assigned to control. Without loss of generality, we index the treated individuals by $1,\dots,m/2$ and the control individuals by $m/2+1,\dots,m$. The sample mean outcomes for the treatment and control groups are
$$
\hat\mu_1:=\frac{2}{m}\sum_{i=1}^{m/2}Y_i,
\qquad
\hat\mu_0:=\frac{2}{m}\sum_{i=m/2+1}^{m}Y_i.
$$

Following \citet{stoye2009minimax_regret_finite_samples}, let $\omega=\{(T_i,Y_i)\}_{i=1}^{m}$ denote the realized experimental data. A statistical treatment rule is a mapping
$
\delta:(\{0,1\} \times \{0,1\})^m\to[0,1],
$
where $\delta(\omega)$ is interpreted as the probability of assigning treatment $1$ to an individual in the rollout population after observing $\omega$. Let $\mathcal{D}$ denote the class of all such rules.

For decision-making under uncertainty, it may be tempting to choose a treatment rule that maximizes worst-case welfare,
\[
\max_{\delta\in\mathcal D}\min_{(\mu_1,\mu_0)\in[0,1]^2}
u(\delta,\mu_1,\mu_0).
\]
However, this maximin welfare criterion is uninformative because an adversarial nature would choose the worst possible expected outcomes, $\mu_1=\mu_0=0$, in which case expected welfare is zero for any treatment rule $\delta$.

A common alternative is to evaluate a treatment rule using worst-case superpopulation regret. Superpopulation regret is defined as the shortfall relative to an oracle that knows the true state ex ante:
\[
\SReg(\delta,\mu_1,\mu_0)
:=
\max\{\mu_1,\mu_0\}-u(\delta,\mu_1,\mu_0).
\]
The minimax-regret approach then selects a treatment rule that minimizes the worst-case superpopulation regret,
\[
\max_{(\mu_1,\mu_0)\in[0,1]^2}
\SReg(\delta,\mu_1,\mu_0).
\]

\citet{stoye2009minimax_regret_finite_samples} shows that the empirical success rule defined below is minimax optimal for worst-case superpopulation regret.

\medskip
\begin{proposition}[Superpopulation minimax optimality of the empirical success rule; \citealp{stoye2009minimax_regret_finite_samples}]\label{prop:stoye-optimal-treatment}
Under the matched-pairs design, the minimax-regret optimal treatment rule is the empirical success rule
\begin{align}
\hat\delta(\omega)
:=
\begin{cases}
0, & \text{if } \hat\mu_1<\hat\mu_0,\\[3pt]
\frac12, & \text{if } \hat\mu_1=\hat\mu_0,\\[3pt]
1, & \text{if } \hat\mu_1>\hat\mu_0.
\end{cases}
\end{align}
That is,
$$
\hat\delta
=
\argmin_{\delta\in\mathcal{D}}
\max_{(\mu_1,\mu_0)\in[0,1]^2}
\SReg(\delta,\mu_1,\mu_0).
$$
\end{proposition}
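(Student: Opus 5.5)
The plan is the standard game-theoretic argument of \citet{stoye2009minimax_regret_finite_samples}: exhibit a prior over $(\mu_1,\mu_0)$ such that $\hat\delta$ is Bayes against it, and show that $\hat\delta$'s regret is maximized on the support of that prior. Together these make $(\hat\delta,\pi)$ a saddle point, so $\hat\delta$ is minimax-regret. The first step is to reduce the data. Under the matched-pairs design with $n=m/2$ units per arm, the sufficient statistics are the success counts $S_1=n\hat\mu_1\sim\mathrm{Bin}(n,\mu_1)$ and $S_0=n\hat\mu_0\sim\mathrm{Bin}(n,\mu_0)$. By a standard Rao--Blackwell-type argument, regret is linear in $\delta$, so conditioning on $(S_1,S_0)$ does not change it, and we may restrict attention to rules $\delta(S_1,S_0)$. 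Then
\[
u(\delta,\mu_1,\mu_0)=\mu_0+\tau\,\mathbb E_{\mu}[\delta(S_1,S_0)],
\]
and regret equals $\tau\,(1-\mathbb E\delta)$ if $\tau>0$ and $|\tau|\,\mathbb E\delta$ if $\tau<0$.

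The second step computes the worst case for $\hat\delta$. By symmetry of $\hat\delta$ under swapping arms, it suffices to take $\tau>0$. Then
\[
\SReg(\hat\delta,\mu)=\tau\bigl(\Pr(S_1<S_0)+\tfrac12\Pr(S_1=S_0)\bigr).
\]
I would show that, for fixed $\tau$, this is maximized along a one-parameter family with $\mu_1+\mu_0=1$ (say $\mu_1=(1+\tau)/2$, $\mu_0=(1-\tau)/2$), or more generally at some symmetric pair. This reduces the problem to a one-dimensional maximization over $\tau$, achieved at some $\tau^*\in(0,1]$. Call the two worst-case states $\theta^+=((1+\tau^*)/2,(1-\tau^*)/2)$ and its mirror $\theta^-$.

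The third step, which I expect to be the main obstacle, is to construct the least-favorable prior: put mass $\tfrac12$ on each of $\theta^\pm$. The Bayes rule against this prior assigns treatment 1 exactly when the posterior odds favor $\theta^+$. The likelihood ratio is
\[
\left(\frac{1+\tau^*}{1-\tau^*}\right)^{S_1-S_0},
\]
which is monotone in $S_1-S_0$. So the Bayes rule treats when $S_1>S_0$, is indifferent when $S_1=S_0$ (where randomizing at $\tfrac12$ is allowed), and controls otherwise. This is exactly $\hat\delta$. The delicate point is justifying that the worst case of $\hat\delta$ genuinely lies on the symmetric line $\mu_1+\mu_0=1$. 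I would first try Stoye's argument: for fixed $\tau$, the difference $S_1-S_0$ is most dispersed (so errors are most likely) when each variance $\mu_j(1-\mu_j)$ is as large as possible, which occurs at the symmetric configuration. This would be made rigorous by coupling each Bernoulli draw as a mixture, showing that $\Pr(S_1\le S_0)$ is maximized there. If that coupling proves awkward, I would simply cite the result, as the proposition attributes it to \citet{stoye2009minimax_regret_finite_samples}.

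Finally, I conclude with the saddle-point inequality. For any $\delta$,
\[
\max_\mu\SReg(\delta,\mu)\ \ge\ \int\SReg(\delta,\mu)\,d\pi\ \ge\ \int\SReg(\hat\delta,\mu)\,d\pi\ =\ \max_\mu\SReg(\hat\delta,\mu).
\]
The first inequality holds because a maximum dominates an average. The second holds because $\hat\delta$ is Bayes against $\pi$. The final equality holds because $\pi$ is supported on $\hat\delta$'s maximizers. Uniqueness, as the $\argmin$ notation suggests, would follow from the strict monotonicity of the likelihood ratio in $S_1-S_0$: any rule deviating from $\hat\delta$ off the tie set has strictly larger Bayes regret. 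On ties, the value $\tfrac12$ is pinned down because any other value raises the regret at one of the two mirrored states.
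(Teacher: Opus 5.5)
Your saddle-point skeleton is sound: a symmetric two-point prior, $\hat\delta$ Bayes against it, and the prior supported on $\hat\delta$'s worst cases. The gap is the step you flag as the main obstacle, that the worst case lies on $\mu_1+\mu_0=1$. It is left unproven, and it is in fact unnecessary.

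The dispersion heuristic is not an argument. At fixed $\tau$, raising $\mu(1-\mu)$ lowers the chance that a pair is tied, which means more informative pairs and fewer errors. It also pushes the sign of each untied pair toward a fair coin, which means more errors. For $m=2$ one gets $e=(1-\tau)/2$ at every midpoint, even though the variance of $S_1-S_0$ changes. Falling back on a citation leaves the proof incomplete.

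The fix is to notice that your Bayes step never uses $\mu_1+\mu_0=1$. For any mirror pair $a>b$,
\[
\frac{f_{a,b}}{f_{b,a}}=\Bigl(\frac{a(1-b)}{b(1-a)}\Bigr)^{S_1-S_0},
\]
with the cases $b=0$ or $a=1$ handled as weak inequalities. (For your symmetric pair the exponent is $2(S_1-S_0)$, which is harmless.) So $\hat\delta$ is Bayes against every $\tfrac12$--$\tfrac12$ mirror prior, and you have two ways to finish:
\begin{enumerate}
\item Place the prior at an actual maximizer $(a^*,b^*)$ of the continuous function $\SReg(\hat\delta,\cdot)$ on the compact set $[0,1]^2$. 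Its mirror is also a maximizer, because $\hat\delta$ is swap-symmetric under equal arm sizes.
\item Do what the paper does for the Gaussian analogue, Proposition~\ref{prop:gaussian-minimax-regret-treatment-rule}. The paper itself does not prove the present proposition; it quotes Stoye. With $\bar R_{a,b}(\delta)=\tfrac12\SReg(\delta,a,b)+\tfrac12\SReg(\delta,b,a)$ and $\bar R_{a,b}(\hat\delta)=\SReg(\hat\delta,a,b)$, take suprema:
\[
\sup_{\mu}\SReg(\delta,\mu)\ \ge\ \sup_{a>b}\bar R_{a,b}(\delta)\ \ge\ \sup_{a>b}\bar R_{a,b}(\hat\delta)\ =\ \sup_{\mu}\SReg(\hat\delta,\mu).
\]
This never requires locating the worst case.
\end{enumerate}

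The uniqueness part also fails. Minimaxity of $\delta$ does force it to be Bayes against the least favorable prior. With interior support points, that pins $\delta=\hat\delta$ off the tie set. On ties, however, your mirror argument only gives
\[
\sum_{\omega:\,S_1=S_0} f_{a^*,b^*}(\omega)\bigl(\delta(\omega)-\tfrac12\bigr)=0,
\]
which is a weighted average, not $\delta=\tfrac12$ pointwise.

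Uniqueness over $\mathcal D$ is in fact false, and your own Rao--Blackwell step shows why. Every $\delta$ with $\mathbb{E}[\delta\mid S_1,S_0]=\hat\delta(S_1,S_0)$ has exactly the same regret at every state. For example, take $m=4$ and, on the tie $S_1=S_0=1$, assign treatment iff $Y_1=Y_3$. That covers two of the four equally likely tie configurations, so this rule is also minimax. The displayed ``$=$'' should therefore be read as $\hat\delta\in\argmin$, which is all Proposition~\ref{prop:fixed-delta-justification} uses. Drop the uniqueness claim.
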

\medskip

The superpopulation problem does not account for the exploration--exploitation tradeoff that arises in a finite population of size $N$. Nevertheless, Proposition~\ref{prop:stoye-optimal-treatment} remains a key building block for the finite-population framework developed below.

\subsection{Finite-Population Regret}

We refer to a pair $(m,\delta)$, consisting of the experimental sample size and the treatment rule, as a \emph{policy}. For a given policy $(m,\delta)$, the expected finite-population welfare is
$$
\begin{aligned}
g(m,\delta,\mu_1,\mu_0)
&:=
\E{\sum_{i=1}^{m/2}Y_i \BI\{T_i=1\} + \sum_{i=m/2+1}^{m}Y_i \BI\{T_i=0 \}   + \sum_{i=m+1}^{N}Y_i \BI\{T_i=\delta(\omega)\}  } \\
&=
\E{
\sum_{i=1}^{m/2}Y_i(1)
+
\sum_{i=m/2+1}^{m}Y_i(0)
+
\sum_{i=m+1}^{N}
\bigl(
\delta(\omega)Y_i(1)+\{1-\delta(\omega)\}Y_i(0)
\bigr)
}.
\end{aligned}
$$
Under the i.i.d.\ assumption across both experimental and rollout units,
$$
g(m,\delta,\mu_1,\mu_0)
=
\frac{m}{2}(\mu_1+\mu_0)
+
(N-m)\Bigl[\mu_0\{1-\E{\delta(\omega)}\}+\mu_1\E{\delta(\omega)}\Bigr].
$$

As in the superpopulation case, directly maximizing the worst-case finite-population welfare is uninformative because the adversary would simply choose $\mu_1 = \mu_0 = 0$, yielding exactly zero welfare for any policy $(m, \delta)$. 

We therefore evaluate policies based on their shortfall relative to an oracle benchmark. Consider an oracle that knows $(\mu_1,\mu_0)$ ex ante. The oracle does not need to experiment, so $(m^{\mathrm{oracle}},\delta^{\mathrm{oracle}})$ with
$m^{\mathrm{oracle}}=0$ and $\delta^{\mathrm{oracle}}=\mathbf{1}\{\mu_1>\mu_0\}$ is the first-best policy, and its exptected welfare is
$$
g^{\mathrm{oracle}}
=
N\max\{\mu_1,\mu_0\}.
$$
For any policy $(m,\delta)$, we define the finite-population regret by
$$
\Reg(m,\delta,\mu_1,\mu_0)
:=
g^{\mathrm{oracle}}-g(m,\delta,\mu_1,\mu_0).
$$

Under the empirical success rule $\hat\delta$, regret decomposes naturally into an exploration cost and an exploitation risk:
\begin{align}
\Reg(m,\hat\delta,\mu_1,\mu_0)
=
C(m,\mu_1,\mu_0)+R(m,\mu_1,\mu_0), \label{eq:decomoposition_regret}
\end{align}
where the exploration cost is 
$$
C(m,\mu_1,\mu_0):=\frac{m}{2}|\tau|
$$
and the exploitation risk is
$$
R(m,\mu_1,\mu_0):=(N-m)|\tau|\,e(m,\mu_1,\mu_0).
$$
Here
\begin{align}
e(m,\mu_1,\mu_0)
:=
\begin{cases}
\Prob{\hat\mu_1<\hat\mu_0}+\frac12\Prob{\hat\mu_1=\hat\mu_0}, & \mu_1>\mu_0,\\[4pt]
\frac12, & \mu_1=\mu_0,\\[4pt]
\Prob{\hat\mu_1>\hat\mu_0}+\frac12\Prob{\hat\mu_1=\hat\mu_0}, & \mu_1<\mu_0
\end{cases} \label{eq:e_m}
\end{align}
is the probability that the empirical success rule fails to select the better arm.

The exploration cost $C(m,\mu_1,\mu_0)$ represents the welfare loss from assigning, ex post, half of the experimental sample to the inferior arm. The exploitation risk $R(m,\mu_1,\mu_0)$ represents the welfare loss from making an erroneous rollout decision and assigning the inferior arm to the remaining $N-m$ individuals.

This decomposition implies that the choice of the experimental sample size $m$ involves a tradeoff between the exploration cost and the exploitation risk. A larger $m$ increases the exploration cost but reduces the exploitation risk, while a smaller $m$ does the reverse. A sensible choice of $m$ should therefore balance these two components.

We first consider selecting a policy $(m,\delta)$ by minimizing worst-case regret.

\medskip
\begin{definition}[Minimax regret criterion]
The absolute minimax regret criterion selects
$$
(\hat m^{\MMR}, \hat \delta^{\MMR})
=
\argmin_{(m,\delta)\in\{0,2,\dots,N\} \times \mathcal{D}}
\max_{(\mu_1,\mu_0)\in[0,1]^2}
\Reg(m,\delta,\mu_1,\mu_0).
$$
\end{definition}
\medskip

Previous work, such as \citet{feit2019test_and_roll}, typically chooses sample sizes by placing priors on $(\mu_1,\mu_0)$, taking the treatment rule as given. Such priors can be difficult to specify, and the resulting design can be sensitive to misspecification. By contrast, the minimax regret criterion selects policies based on worst-case performance, thereby avoiding the arbitrariness associated with prior selection.

We next ask whether the empirical success rule $\hat{\delta}$ remains minimax-regret optimal in our test-and-roll setting. The following proposition shows that, for any fixed experimental sample size $m$, it remains globally minimax-regret optimal, extending the result of \citet{stoye2009minimax_regret_finite_samples} to the finite-population setting.

\medskip
\begin{proposition}[Finite-population minimax optimality for fixed experimental size]\label{prop:fixed-delta-justification}
For every even $m$, the empirical success rule $\hat\delta$ is minimax-regret optimal:
\[
\hat\delta
\in
\argmin_{\delta\in\mathcal{D}}
\max_{(\mu_1,\mu_0)\in[0,1]^2}
\Reg(m,\delta,\mu_1,\mu_0).
\]
Consequently,
\[
\hat m^{\MMR}
\in
\argmin_{m\in\{0,2,\dots,N\}}
\max_{(\mu_1,\mu_0)\in[0,1]^2}
\Reg(m,\hat\delta,\mu_1,\mu_0).
\]
\end{proposition}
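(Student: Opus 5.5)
Fix an even $m$. The plan is to reduce the finite-population regret for this fixed $m$ to a positive multiple of superpopulation regret plus a term that does not depend on $\delta$, and then invoke Proposition~\ref{prop:stoye-optimal-treatment}. From the expression for $g$,
\[
\Reg(m,\delta,\mu_1,\mu_0)
=
N\max\{\mu_1,\mu_0\}-\frac m2(\mu_1+\mu_0)-(N-m)\,u(\delta,\mu_1,\mu_0),
\]
where $u(\delta,\mu_1,\mu_0)=\mu_1\E{\delta(\omega)}+\mu_0\{1-\E{\delta(\omega)}\}$ is the superpopulation welfare of $\delta$ based on $m$ observations. Using $\max\{\mu_1,\mu_0\}-\frac12(\mu_1+\mu_0)=\frac12|\tau|$, this rearranges to
\[
\Reg(m,\delta,\mu_1,\mu_0)=\frac m2|\tau|+(N-m)\,\SReg(\delta,\mu_1,\mu_0).
\]
For $m<N$ the weight $N-m$ is strictly positive. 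For $m=N$ the regret equals $\frac N2|\tau|$ for every $\delta$, so the claim holds trivially. For $m=0$, $\delta$ is a constant, and the argument below still applies, with $\hat\delta\equiv\frac12$ being optimal.

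The main obstacle is that the additive term $\frac m2|\tau|$ depends on the state. As a result, minimizing $\max_\mu\{\frac m2|\tau|+(N-m)\SReg\}$ is not literally the same as minimizing $\max_\mu \SReg$, so the Stoye result cannot be applied verbatim. To handle this I would reuse the structure of Stoye's proof, namely a saddle-point argument with a least favorable prior that is symmetric under swapping the two arms. This goes in four steps.

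\emph{Step 1 (symmetry).} Both $\Reg$ and the parameter space are invariant under the permutation $(\mu_1,\mu_0)\mapsto(\mu_0,\mu_1)$ combined with relabeling the arms in the data and replacing $\delta$ by $1-\delta$. By convexity of $\max_\mu \Reg(m,\cdot,\mu)$ in $\delta$ (regret is linear in $\delta$), the symmetrized rule is no worse. Hence we may restrict attention to symmetric rules.

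\emph{Step 2 (least favorable prior).} Take a prior $\pi$ that places mass $\frac12$ on $(\mu^*,\mu_0^*)$ and mass $\frac12$ on its mirror image, where $(\mu^*,\mu_0^*)$ maximizes $\Reg(m,\hat\delta,\cdot)$. Against any symmetric two-point prior of this form, the Bayes rule is the likelihood-ratio rule. Under the matched-pairs Bernoulli design, the likelihood ratio is monotone in $\hat\mu_1-\hat\mu_0$. Indeed, the likelihood ratio between the two mirror states is $\bigl(\frac{\mu^*(1-\mu_0^*)}{\mu_0^*(1-\mu^*)}\bigr)^{\frac m2(\hat\mu_1-\hat\mu_0)}$. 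Therefore $\hat\delta$, with ties broken at $\frac12$, is Bayes against $\pi$. Because the term $\frac m2|\tau|$ is identical in the two mirror states, it does not affect the Bayes rule.

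\emph{Step 3 (equalizer).} By Step 1, $\Reg(m,\hat\delta,\cdot)$ takes the same value at both support points of $\pi$, and each support point attains $\max_\mu\Reg(m,\hat\delta,\mu)$. Then the standard argument applies: for any $\delta$,
\[
\max_\mu\Reg(m,\delta,\mu)\ \ge\ \int\Reg(m,\delta,\mu)\,d\pi\ \ge\ \int\Reg(m,\hat\delta,\mu)\,d\pi\ =\ \max_\mu\Reg(m,\hat\delta,\mu).
\]

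\emph{Step 4 (consequence).} The second claim follows by minimizing $\max_\mu \Reg(m,\hat\delta,\mu)$ over $m$. For each $m$ this quantity equals $\min_\delta\max_\mu\Reg(m,\delta,\mu)$, so the joint argmin over $(m,\delta)$ can be taken with $\delta=\hat\delta$.

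The delicate point is Step 2. A maximizer $(\mu^*,\mu_0^*)$ must exist, which holds by compactness of $[0,1]^2$ and continuity of the regret, which is polynomial in $\mu$. The maximizer should also be interior enough that the likelihood ratio is well defined. If it lies on the boundary, for example $\mu_0^*=0$, I would argue directly that any rule favoring arm 1 whenever $\hat\mu_1>\hat\mu_0$ remains Bayes in the limit. Alternatively, I would use a prior supported on a pair of mirror points in which the tie-breaking at $\hat\mu_1=\hat\mu_0$ is what matters.
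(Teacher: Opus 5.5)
Your proof is correct, but it does not follow the paper's route. The paper's proof starts from $\Reg(m,\delta,\mu_1,\mu_0)=\frac m2|\mu_1-\mu_0|+(N-m)\SReg(\delta,\mu_1,\mu_0)$ and notes that the first term does not involve $\delta$. It then asserts that the argmin sets of $\max\Reg$ and $\max\SReg$ over $\mathcal D$ therefore coincide, and invokes Proposition~\ref{prop:stoye-optimal-treatment}. You are right that this inference is not valid as written: the dropped term varies with the state over which the maximum is taken. The asserted equality of argmin sets even fails at $m=N$, where $\Reg$ does not depend on $\delta$ at all while $\SReg$ does.

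Your two-point mirror-prior argument proves exactly the inclusion the proposition claims, without using Stoye's result as a black box. The exploration cost is free of $\delta$ and takes the same value at $(a,b)$ and $(b,a)$. So it changes neither the Bayes rule nor the equalizer property of $\hat\delta$, and your Step~3 closes the argument. This is the technique the paper itself uses for the Gaussian case (Proposition~\ref{prop:gaussian-minimax-regret-treatment-rule}), applied to the Bernoulli likelihood. The paper's route buys brevity; yours buys a self-contained proof that is actually valid.

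Two simplifications are available. First, Step~1 is unnecessary. Step~3 compares $\hat\delta$ with an arbitrary $\delta$ and uses only the swap-symmetry of $\hat\delta$ itself.

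Second, the boundary case needs no limiting argument. Write $S_1,S_0$ for the success counts in the two arms, $n=m/2$, and $d=S_1-S_0\ge 0$. Then $f_{a,b}(\omega)-f_{b,a}(\omega)=(ab)^{S_0}\{(1-a)(1-b)\}^{n-S_1}\bigl\{(a(1-b))^{d}-(b(1-a))^{d}\bigr\}\ge 0$ for all $a>b$ in $[0,1]$, with $0^0=1$. The case $d<0$ is symmetric. Hence $\hat\delta$ satisfies the pointwise Bayes condition even when $b=0$ or $a=1$.
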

\medskip

\begin{proof}[Proof of Proposition~\ref{prop:fixed-delta-justification}]
Fix an even $m$. For any rule $\delta$,
$$
\Reg(m,\delta,\mu_1,\mu_0)
=
\frac{m}{2}|\mu_1-\mu_0|+(N-m)\SReg(\delta,\mu_1,\mu_0).
$$
The first term does not depend on $\delta$. Therefore,
$$
\argmin_{\delta\in\mathcal{D}}
\max_{(\mu_1,\mu_0)\in[0,1]^2}
\Reg(m,\delta,\mu_1,\mu_0)
=
\argmin_{\delta\in\mathcal{D}}
\max_{(\mu_1,\mu_0)\in[0,1]^2}
\SReg(\delta,\mu_1,\mu_0).
$$
By Proposition~\ref{prop:stoye-optimal-treatment}, the right-hand side is the empirical success rule $\hat\delta$. This proves the first claim. The expression for $\hat m^{\MMR}$ follows immediately by optimizing the resulting worst-case regret over $m$.
\end{proof}
\medskip

This result justifies our restriction to the empirical success rule $\hat{\delta}$ and reduces the design problem to the choice of the experimental sample size $m$.

\subsection{Limitations of Absolute Minimax Regret}

This section illustrates sample-size selection under the standard absolute minimax regret criterion through a numerical study. We show that, in the finite-population setting, this criterion can recommend experimental sample sizes that are implausibly small for practical applications.

We first examine the behavior of $\hat m^{\MMR}$ across different population sizes. Table~\ref{tab:minimax_m} reports the minimax-regret sample size $\hat m^{\MMR}$ and the corresponding experimental fraction $\hat m^{\MMR}/N$ for several values of $N$. The criterion recommends very small experiments: the optimal experimental fraction is about $9\%$ when $N=200$ and falls to only $2.3\%$ when $N=10{,}000$.

\bigskip

\begin{table}[H]
\centering
\caption{Optimal sample sizes $\hat m^{\MMR}$ under the absolute minimax regret criterion for several population sizes $N$}
\label{tab:minimax_m}
\begin{tabular}{@{}lrrrrr@{}}
\toprule
$N$ & 200 & 500 & 1000 & 5000 & 10000 \\
\midrule
$\hat m^{\MMR}$ & 18 & 32 & 50 & 146 & 230 \\
$\hat m^{\MMR}/N$ & 0.090 & 0.064 & 0.050 & 0.029 & 0.023 \\
\bottomrule
\end{tabular}
\end{table}

\bigskip

Figure~\ref{fig:absolute_regret_worstcase} illustrates the adversary's strategy and the resulting minimax regret for $N=500$. Panel~(a) shows the worst-case parameter values as a function of $m$. When $m$ is small, the adversary seeks to maximize exploitation risk: because the rollout population, $N-m$, carries much greater weight than the experimental sample, the adversary favors interior parameter values with $\mu_1\approx\mu_0$, where the better arm is difficult to identify. When $m$ is large, the adversary instead maximizes exploration cost by choosing extreme parameter values, such as $(\mu_1,\mu_0)=(1,0)$, that maximize $|\tau|$. Panel~(b) shows the resulting worst-case regret for each value of $m$.

\bigskip

\begin{figure}[H]
\centering
\caption{Worst-case behavior under absolute minimax regret for $N=500$.}
\label{fig:absolute_regret_worstcase}

\begin{minipage}{0.48\textwidth}
\centering
\subcaption{Worst-case parameter values $(\mu_1,\mu_0)$ as a function of $m$.}
\label{fig:absolute_regret_worstcase_panel}
\safeincludegraphics[width=\linewidth]{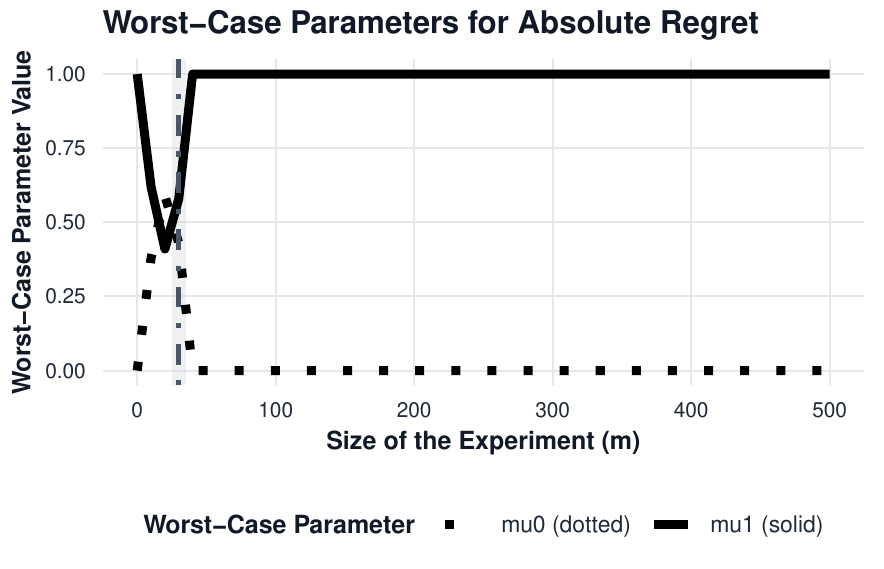}
\end{minipage}\hfill
\begin{minipage}{0.48\textwidth}
\centering
\subcaption{Worst-case absolute regret as a function of the experimental sample size $m$.}
\label{fig:absolute_regret_m_panel}
\safeincludegraphics[width=\linewidth]{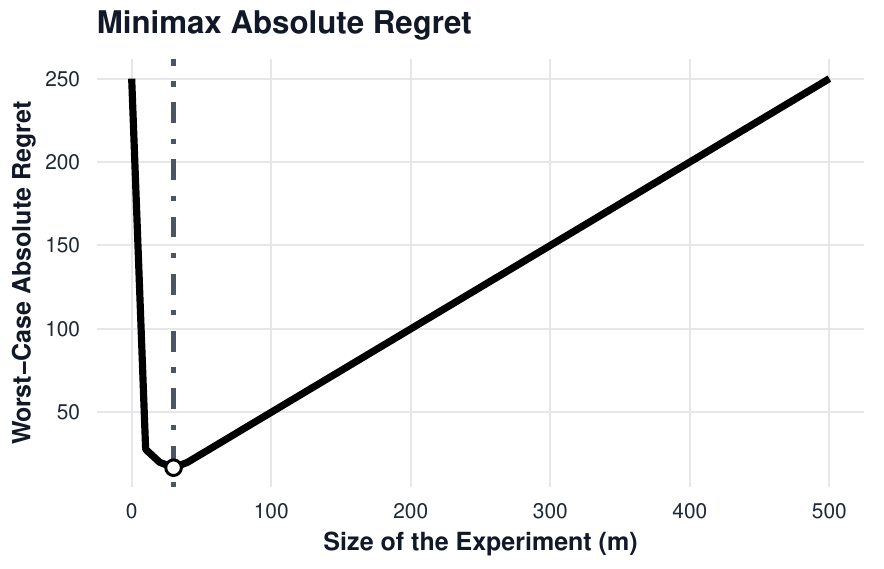}
\end{minipage}
\begin{minipage}{0.92\textwidth}
\medskip
\footnotesize
\emph{Note:} Panel~(a) shows the worst-case parameter values $(\mu_1,\mu_0)$ as a function of $m$. 
Panel~(b) shows the resulting worst-case absolute regret as a function of $m$, which yields $\hat m^{\MMR}=32$.
\end{minipage}
\end{figure}

\bigskip

Although these two adversarial regimes reflect the exploration--exploitation tradeoff, the standard absolute minimax regret criterion suppresses exploration to an excessive degree. To see this, consider the regret decomposition
\[
\Reg(m,\hat\delta,\mu_1,\mu_0)
=
\frac{m}{2}|\tau|
+
(N-m)|\tau|\,e(m,\mu_1,\mu_0),
\]
where the first term is the exploration cost and the second term is the exploitation risk. The exploitation risk is scaled by the error probability $e(m,\mu_1,\mu_0)$, which is at most one. Hence, for large $m$, the adversary can generate large regret simply by choosing parameter values with a large treatment-effect magnitude $|\tau|$, such as $(\mu_1,\mu_0)=(1,0)$, even if the probability of a mistaken rollout decision is small. As a result, the maximum regret is often driven by the exploration-cost component, $(m/2)|\tau|$. The criterion therefore penalizes experimentation too severely and pushes the optimal sample size toward very small values.

A similar issue arises under other common criteria, including relative regret,
$$
\RReg(m,\hat\delta,\mu_1,\mu_0)
:=
\frac{\Reg(m,\hat\delta,\mu_1,\mu_0)}{g^{\mathrm{oracle}}},
$$
which normalizes regret by the oracle gain. Appendix~\ref{sec:relative-regret} shows that this criterion also fails to yield a meaningful interior optimum.

These limitations suggest that total-regret criteria may not provide an appropriate balance between exploration and exploitation. We therefore turn to a criterion that evaluates this tradeoff at the margin, as developed in the next section.

\section{Proposed Method: Worst-case Marginal Benefit Rule}\label{sec:wmb}

The undesirable behavior of absolute minimax regret stems from an imbalance in the exploration--exploitation tradeoff: the exploration cost receives too much weight relative to the exploitation risk. We therefore propose a criterion that evaluates this tradeoff directly at the margin.

As the experimental sample size increases, the exploration cost rises while the exploitation risk falls. For small $m$, additional experimentation is typically worthwhile because the resulting reduction in exploitation risk exceeds the incremental exploration cost. Since information exhibits diminishing marginal returns, however, there should be a stopping point at which this inequality reverses. 

We formalize this idea by comparing the marginal reduction in exploitation risk from adding one matched pair to the experiment with the corresponding increase in exploration cost.

\medskip
\begin{definition}[Marginal cost--benefit ratio]
\label{def:marginal_cost_benefit_ratio}
For even $m\le N-2$, we define the marginal cost--benefit ratio as
$$
\eta(m,\mu_1,\mu_0)
:=
-
\frac{R(m+2,\mu_1,\mu_0)-R(m,\mu_1,\mu_0)}
{C(m+2,\mu_1,\mu_0)-C(m,\mu_1,\mu_0)}.
$$
\end{definition}
\medskip

The ratio $\eta(m,\mu_1,\mu_0)$ measures the marginal reduction in exploitation risk from adding one matched pair to the experiment, normalized by the corresponding marginal increase in exploration cost. When $\eta(m,\mu_1,\mu_0)>1$, this additional experimentation is worthwhile in expectation; when $\eta(m,\mu_1,\mu_0)\le 1$, it is not justified because the expected marginal benefit no longer exceeds the expected marginal cost.

Because
$
C(m+2,\mu_1,\mu_0)-C(m,\mu_1,\mu_0)=|\tau|,
$
this criterion can be written equivalently as
\begin{equation}
\eta(m,\mu_1,\mu_0)
=
(N-m)e(m,\mu_1,\mu_0)-(N-m-2)e(m+2,\mu_1,\mu_0),
\label{eq:bernoulli_discrete_eta}
\end{equation}
where $(N-m)e(m,\mu_1,\mu_0)$ represents the expected number of individuals who receive the inferior arm during the rollout phase if the experiment stops at size $m$. Thus the expression~(\ref{eq:bernoulli_discrete_eta}) measures how many individuals are saved from the inferior treatment in the rollout by adding one more matched pair to the experiment. 

We now define the optimal experimental sample size by applying this marginal cost--benefit comparison in a worst-case manner over all possible states of nature.

\medskip
\begin{definition}[Worst-case marginal benefit (WMB) rule]
\label{def:wmb-rule}
The WMB sample size is the smallest even sample size at which the marginal benefit of continued exploration is no greater than its marginal cost uniformly over all possible states of nature:
$$
\hat m^{\WMB}
=
\min\left\{
m\in\{0,2,\dots,N-2\}:
\max_{(\mu_1,\mu_0)\in[0,1]^2}
\eta(m,\mu_1,\mu_0)\le 1
\right\}.
$$
\end{definition}
\medskip

Under this criterion, the adversary no longer maximizes total regret. Instead, it chooses the parameter values that maximize the marginal benefit of continued experimentation. Conceptually, the WMB rule stops as soon as the decision-maker can guarantee that switching from exploration to exploitation would not entail a marginal welfare loss from stopping too early. In this sense, the rule guards against under-experimentation: there is no possible scenario in which one more matched pair would have reduced exploitation losses by more than its marginal exploration cost.

\section{Numerical Study for the Bernoulli WMB Rule}\label{sec:simulation}

This section illustrates the behavior of the Bernoulli WMB rule through a numerical study, which motivates the theoretical analysis in Section~\ref{sec:theoretical-results}. We highlight three key findings. First, when $N$ is small, the least favorable states lie on the boundary, leading to an optimal sample size $\hat m^{\WMB}$ close to $N/2$. Second, as $N$ grows, the least favorable parameters move away from the boundary and localize near the diagonal. Third, within this interior regime, the WMB sample size approaches the rule-of-thirds benchmark, $\hat m^{\WMB}\approx N/3$.

For the numerical computations, we calculate the WMB sample size over discrete grids with $\epsilon=0.01$ and $\epsilon=0.005$:
$$
\Theta_{\epsilon}^{\mathrm{grid}}
=
\left\{
(\mu_1,\mu_0)\in[0,1]^2:
\mu_1,\mu_0\in\{0,\epsilon,2\epsilon,\ldots,1\},
\ |\mu_1-\mu_0|\ge \epsilon
\right\}.
$$
Specifically, we compute
$$
\hat m_{\epsilon,\mathrm{grid}}^{\WMB}(N)
=
\min\left\{
m\in\{0,2,\ldots,N-2\}:
\max_{(\mu_1,\mu_0)\in\Theta_{\epsilon}^{\mathrm{grid}}}
\eta(m,\mu_1,\mu_0)\le 1
\right\}.
$$

Figure~\ref{fig:eta_stopping_N500} shows the maximized marginal cost--benefit ratio and illustrates where it crosses one for $N=500$ and $\epsilon=0.01$. The shape of the curve indicates that the marginal benefit of exploration declines as the experimental sample size increases.

\bigskip

\begin{figure}[htbp]
\centering
\caption{Maximized marginal cost-benefit ratio at $N=500$ for $\epsilon=0.01$.}
\label{fig:eta_stopping_N500}
\safeincludegraphics[width=0.7\textwidth]{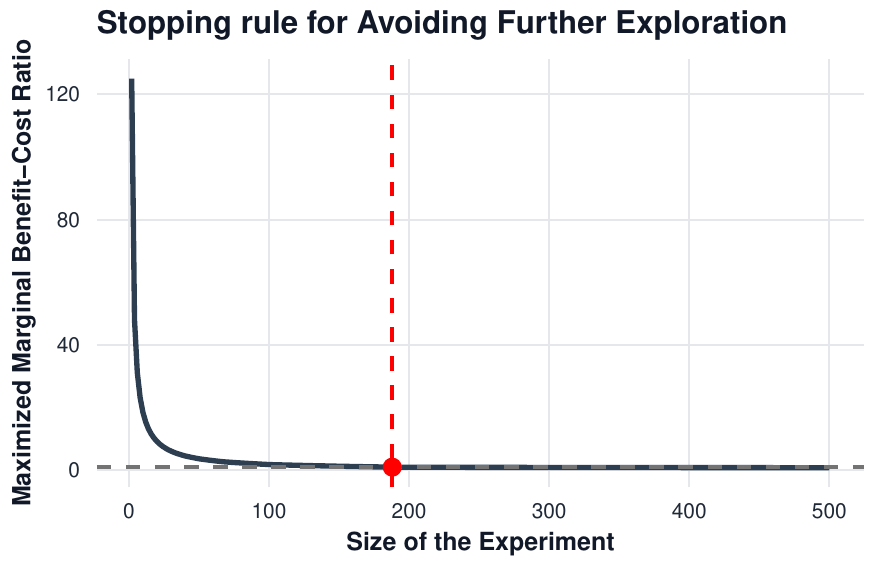}
\begin{minipage}{0.92\textwidth}
\medskip
\footnotesize
\emph{Note:} 
This figure plots the maximized marginal cost-benefit ratio at $N=500$ for $\epsilon=0.01$.
The vertical line shows the optimal sample size $\hat m_{0.01,\mathrm{grid}}^{\WMB}(500) = 182$.
\end{minipage}
\end{figure}

\bigskip

Table~\ref{tab:worstcase_mu_summary} reports the calculated WMB sample size $\hat m_{\epsilon,\mathrm{grid}}^{\WMB}(N)$ for each population size $N$, together with the corresponding least favorable parameters $(\mu_0,\mu_1)$. When $N$ is small relative to $1/\epsilon$, the required experimental share is close to $1/2$ because the least favorable states lie on the boundary, for example with one parameter equal to zero. As $N$ increases, the calculated experimental share decreases and stabilizes near the $N/3$ benchmark, while the least favorable parameters move to the interior and localize near the diagonal.

\begin{table}[htbp]
\centering
\caption{$\epsilon$-WMB sample sizes and least favorable parameters by $N$}
\label{tab:worstcase_mu_summary}
\begin{tabular}{@{}rrrrrr@{}}
\toprule
$\epsilon$ & $N$ & $\hat m_{\epsilon,\mathrm{grid}}^{\WMB}$ & $\hat m_{\epsilon,\mathrm{grid}}^{\WMB}/N$ & Least favorable $\mu_0$ & Least favorable $\mu_1$ \\
\midrule
$0.01$ & 200 & 90 & 0.450 & 0.010 & 0.000 \\
$0.01$ & 500 & 188 & 0.376 & 0.010 & 0.000 \\
$0.01$ & 1000 & 332 & 0.332 & 0.500 & 0.490 \\
$0.01$ & 5000 & 1608 & 0.322 & 0.500 & 0.490 \\
$0.01$ & 10000 & 3106 & 0.311 & 0.500 & 0.490 \\
\midrule
$0.005$ & 200 & 96 & 0.480 & 0.005 & 0.000 \\
$0.005$ & 500 & 216 & 0.432 & 0.005 & 0.000 \\
$0.005$ & 1000 & 376 & 0.376 & 0.005 & 0.000 \\
$0.005$ & 5000 & 1652 & 0.330 & 0.495 & 0.500 \\
$0.005$ & 10000 & 3274 & 0.327 & 0.500 & 0.505 \\
\bottomrule
\end{tabular}
\bigskip
\end{table}

These findings suggest that the pathological result $\hat m_{\epsilon,\mathrm{grid}}^{\WMB}(N) \approx N/2$ is driven by least favorable states on the boundary. Such boundary cases can arise for any population size $N$ unless the state space $\Theta$ is restricted by the grid parameter $\epsilon$. After excluding these cases, the optimal sample size $\hat m^{\WMB}$ approaches the $N/3$ benchmark as $N$ becomes sufficiently large.

\section{Theoretical Results}\label{sec:theoretical-results}

This section formalizes the numerical findings from Section~\ref{sec:simulation}. We first show that the pathological result $\hat m^{\WMB}\approx N/2$ arises from a boundary rare-event detection problem, which is conceptually distinct from the comparison of two nondegenerate treatments. We then prove that, under a regular growth regime, the least favorable Bernoulli states localize near the diagonal. 
This localization allows the exact Bernoulli WMB ratio to be uniformly approximated by an asymptotic Gaussian limit, which leads to our main result: the rule-of-thirds benchmark, $\hat m^{\WMB}\approx N/3$.

\subsection{The Boundary Rare-Event Pathology}\label{sec:boundary_pathology}

Before turning to our main analysis, it is useful to clarify why the exact Bernoulli WMB problem over the unrestricted parameter space $[0,1]^2$ can demand $\hat m_{\epsilon,\mathrm{grid}}^{\WMB}(N) \approx N/2$. This extreme sample size requirement arises not from comparing two well-behaved distributions, but from the pathological nature of the boundary, where one arm has zero variance and the other's success probability vanishes at the rate of $O(1/N)$.

\medskip
\begin{example}[Boundary rare-event states can force $m\approx N/2$]
\label{ex:boundary_pathology_half}
Consider the boundary sequence
$$
(\mu_1,\mu_0)=\left(\frac{c}{N},0\right),
\qquad c>0,
$$
where $\tau = \frac{c}{N} > 0$ since $c > 0$. Because the control arm never succeeds ($\mu_0=0$), its sample mean is $\hat\mu_0 = 0$ almost surely. The experiment therefore no longer compares two noisy averages; instead, it degenerates into rare-event detection in the treatment arm, where observing even a single success perfectly reveals the treatment's superiority. Let $S_1 := \frac{m}{2}\hat\mu_1$ denote the number of successes in the treatment arm, which is distributed as
$$
S_1 \sim \mathrm{Binomial}\left(\frac{m}{2},\tau\right),
\qquad
\tau=\frac{c}{N}.
$$
Under the empirical success rule, the experiment ends in a tie ($\hat\mu_1=\hat\mu_0$) a.s. if and only if $S_1=0$. In this event, the remaining $N-m$ individuals are split equally across the two arms. Since the treatment arm is better by $\tau$, the resulting exploitation risk is
$$
R(m)
= (1-\tau)^{m/2} \times \frac{N-m}{2} \tau.
$$
Since the marginal exploration cost is $\tau/2$ per additional individual, the continuous relaxation of the corresponding marginal ratio is
$$
\eta(m)
\approx
\frac{-\frac{\partial}{\partial m}R(m)}{\tau/2}
=
(1-\tau)^{m/2}
\left(
1-\frac{N-m}{2}\log(1-\tau)
\right).
$$

Now write
$$
m=\alpha N,
\qquad
\tau=\frac{c}{N}.
$$
As $N\to\infty$,
$$
\eta(\alpha N)
\to
e^{-\alpha c/2}
\left(
1+\frac{1-\alpha}{2}c
\right)
=
1+\frac{1-2\alpha}{2}c+O(c^2),
$$
where the equality follows from a Taylor expansion around $c=0$. If $\alpha<1/2$, then $\eta(\alpha N)>1$ for all sufficiently small $c>0$ and all sufficiently large $N$. Therefore, along this boundary sequence, the exact full-space WMB feasibility condition therefore fails below approximately $m=N/2$.
\end{example}
\medskip

Example~\ref{ex:boundary_pathology_half} shows that the near-$N/2$ behavior is generated by a pathological rare-event detection mechanism.

We threfore proceed by excluding this pathological case. To rule out the boundary pathology, the remainder of the analysis uses the centered parametrization
$$
\mu:=\frac{\mu_1+\mu_0}{2},
\qquad
\delta:=\frac{\mu_1-\mu_0}{2},
$$
and imposes the interior midpoint restriction $\mu\in[\kappa,1-\kappa]$ for some fixed $\kappa\in(0,1/2)$. In this notation, $\delta$ denotes the half-gap between the two arms.

\subsection{Localization near the Diagonal}

As a foundation for the subsequent analysis, we first examine the least favorable state of nature that maximizes the marginal benefit ratio $\eta(m,\mu_1,\mu_0)$. The next result formalizes the simulation finding that, as the experimental sample size grows with the population size, the least favorable Bernoulli configuration moves toward the diagonal.

\begin{proposition}[The exact Bernoulli maximizer localizes near the diagonal]
\label{prop:bernoulli_localization_body}
Let $\{m_N\}$ be a sequence with
$
m_N\in\{0,2,\dots,N-2\}.
$
For each $N$, let $(\mu_{1,N}^\star,\mu_{0,N}^\star)$ satisfy
$$
\eta(m_N,\mu_{1,N}^\star,\mu_{0,N}^\star)
=
\max_{(\mu_1,\mu_0)\in[0,1]^2}
\eta(m_N,\mu_1,\mu_0).
$$
Suppose that, for some $\alpha>0$, $m_N/N\to\alpha$ as $N\to\infty$.
Then
$$
|\mu_{1,N}^\star-\mu_{0,N}^\star|
\le
2\sqrt{\frac{\log N}{m_N}}
=
O\left(\sqrt{\frac{\log N}{N}}\right)
\to 0.
$$
\end{proposition}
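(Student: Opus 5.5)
The plan is to compare two bounds on $\eta(m_N,\cdot,\cdot)$: a universal lower bound on its maximum, attained on the diagonal, and an upper bound that is tiny whenever the gap $|\tau|=|\mu_1-\mu_0|$ is large. Together they force the maximizer to have a small gap. Since $m_N/N\to\alpha>0$, we have $m_N\ge 2$ for all large $N$, so we work with such $N$ throughout.

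\emph{Step 1 (lower bound on the maximum).} At any diagonal point $\mu_1=\mu_0$, the definition~(\ref{eq:e_m}) gives $e(m,\mu_1,\mu_0)=\tfrac12$ for every $m$. Substituting into~(\ref{eq:bernoulli_discrete_eta}) yields
\[
\eta(m_N,\mu_1,\mu_0)=\tfrac{N-m_N}{2}-\tfrac{N-m_N-2}{2}=1 .
\]
(On the diagonal the ratio in Definition~\ref{def:marginal_cost_benefit_ratio} is $0/0$, so we interpret $\eta$ through~(\ref{eq:bernoulli_discrete_eta}).) Hence $\max_{(\mu_1,\mu_0)}\eta(m_N,\mu_1,\mu_0)\ge 1$.

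\emph{Step 2 (upper bound off the diagonal).} Since $e\ge 0$ and $N-m_N-2\ge 0$, equation~(\ref{eq:bernoulli_discrete_eta}) gives
\[
\eta(m_N,\mu_1,\mu_0)\le (N-m_N)\,e(m_N,\mu_1,\mu_0)\le N\,e(m_N,\mu_1,\mu_0).
\]
Take $\mu_1>\mu_0$ without loss of generality. Then $e(m,\mu_1,\mu_0)\le \Prob{\hat\mu_1-\hat\mu_0\le 0}$, because the tie term is counted with weight $\tfrac12\le 1$. The statistic $\hat\mu_1-\hat\mu_0$ is a sum of $m$ independent terms, each lying in an interval of length $2/m$, with mean $\tau$. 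Hoeffding's inequality therefore gives
\[
\Prob{\hat\mu_1-\hat\mu_0\le 0}\le \exp\!\left(-\frac{2\tau^2}{m\,(2/m)^2}\right)=\exp\!\left(-\frac{m\tau^2}{2}\right).
\]
The case $\mu_1<\mu_0$ is symmetric.

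\emph{Step 3 (conclusion).} Suppose $|\tau|>2\sqrt{\log N/m_N}$. Then $m_N\tau^2/2>2\log N$, so Step~2 gives
\[
\eta(m_N,\mu_1,\mu_0)<N\cdot N^{-2}=N^{-1}<1\le \max\eta .
\]
Such a point therefore cannot be a maximizer, and every maximizer satisfies $|\mu_{1,N}^\star-\mu_{0,N}^\star|\le 2\sqrt{\log N/m_N}$. Finally, $m_N\sim\alpha N$ turns this bound into $O(\sqrt{\log N/N})\to 0$.

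The main point needing care is Step~2: getting the Hoeffding constant right for the difference of two half-samples, and making sure the tie term does not spoil the bound. The remaining steps are bookkeeping, including the strict inequality needed to exclude points exactly at the boundary of the large-gap region.
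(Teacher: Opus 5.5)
Your proposal is correct and follows the paper's argument: $\eta=1$ on the diagonal via \eqref{eq:bernoulli_discrete_eta}, the bound $\eta\le (N-m_N)\,e(m_N,\mu_1,\mu_0)$, a Hoeffding bound on the error probability, and a comparison of the two. The only difference is that you apply Hoeffding to the $m$ individual outcome terms rather than to the $m/2$ paired differences $W_j\in[-1,1]$. This gives the sharper exponent $m\tau^2/2$ instead of the paper's $m\tau^2/4$, so the stated localization bound holds with room to spare.
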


Proposition~\ref{prop:bernoulli_localization_body} identifies the relevant asymptotic behavior of the least favorable states: the least favorable Bernoulli parameters concentrate near the diagonal. As shown in the next subsection, this localization places the problem on a scale on which a Gaussian approximation becomes uniformly valid.

\subsection{Uniform Gaussian Approximation and the Rule of Thirds}

Building on Proposition~\ref{prop:bernoulli_localization_body}, this subsection establishes a uniform Gaussian approximation on the relevant interior region and solves the resulting asymptotic Gaussian objective. This analysis yields the analytical rule-of-thirds benchmark.

For the remainder of this subsection, we parameterize the treatment and control means by their midpoint $\mu$ and half-gap $\delta$:
$$
\mu_1 = \mu+\delta,
\qquad
\mu_0 = \mu-\delta,
$$
and write $P_{\mu,\delta}$ for probabilities evaluated under this state. 
The exclusion of the pathological boundary case discussed in Section~\ref{sec:boundary_pathology}, together with the near-diagonal localization result in Proposition~\ref{prop:bernoulli_localization_body}, motivates us to restrict attention to the following local parameter space:
$$
\mathcal A_N(\kappa,K)
=
\left\{
(\mu,\delta):
\mu\in[\kappa,1-\kappa],\
0<\delta\le K\sqrt{\frac{\log N}{N}},\
\mu\pm\delta\in[0,1]
\right\}.
$$
This class rules out the pathological boundary case by imposing $\mu\in[\kappa,1-\kappa]$ and captures the near-diagonal behavior by allowing the half-gap $\delta$ to shrink with $N$.

Define the Gaussian limit function as
$$
f_k(t):=2\Phi(-t)+kt\phi(t),
$$
where $\Phi$ and $\phi$ denote the standard normal distribution and density functions. For any even sequence of sample sizes $\{m_N\}$, set
$$
k_N:=\frac{N-m_N}{m_N},
\qquad
q:=\mu(1-\mu),
\qquad
t_N(\mu,\delta):=\frac{\sqrt{m_N}\,\delta}{\sqrt{q-\delta^2}}.
$$
To analyze the tie-breaking event, let $S_{m_N}$ denote the difference in the total number of successes between the treatment and control arms:
$$
S_{m_N} := \sum_{i=1}^{m_N/2} Y_i(1) - \sum_{i=m_N/2+1}^{m_N} Y_i(0).
$$

The next theorem shows that, uniformly over the full interior region, the exact Bernoulli marginal ratio is asymptotically governed by the Gaussian limit function $f_{k_N}$.

\medskip
\begin{theorem}[Uniform Gaussian approximation on $\mathcal A_N(\kappa,K)$]
\label{thm:full_region_uniform_gaussian_approx}
Fix $\kappa\in(0,1/2)$, $K>0$, and $\alpha_0\in(0,1)$. Let $\{m_N\}$ be any even sequence such that
$$
\alpha_0 N\le m_N\le N-2
\qquad (N\ge 1).
$$
Then, uniformly over $(\mu,\delta)\in\mathcal A_N(\kappa,K)$,
\begin{equation}
\eta(m_N,\mu+\delta,\mu-\delta)
=
f_{k_N}\bigl(t_N(\mu,\delta)\bigr)
+
o(1).
\label{eq:full_region_eta_profile}
\end{equation}
\end{theorem}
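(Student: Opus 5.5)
The plan is to expand $\eta$ around a single Gaussian scale. Throughout, write $m=m_N$, $\mu_1=\mu+\delta$ and $\mu_0=\mu-\delta$; by symmetry we may take $\delta>0$. Let $G(s):=\BI\{s<0\}+\tfrac12\BI\{s=0\}$, so that $e(m)=\mathbb{E}[G(S_m)]$. The difference $S_m$ has mean $m\delta$ and variance $\frac m2\{\mu_1(1-\mu_1)+\mu_0(1-\mu_0)\}=m(q-\delta^2)$. Hence its standardized mean is exactly $t_N(\mu,\delta)$, which explains why $t_N$ appears in the statement. Rewrite \eqref{eq:bernoulli_discrete_eta} as
\[
\eta(m,\mu_1,\mu_0)=2e(m+2)+(N-m)\bigl\{e(m)-e(m+2)\bigr\}.
\]
The first term should converge to $2\Phi(-t_N)$ and the second to $k_N t_N\phi(t_N)$. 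Heuristically, the continuous derivative of $\Phi(-\sqrt m\,\delta/\sqrt{q-\delta^2})$ in $m$ is $-t\phi(t)/(2m)$; a step of size $2$, multiplied by $N-m=k_N m$, gives $k_N t\phi(t)$. Note that $k_N\le (1-\alpha_0)/\alpha_0$ is bounded.

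\textbf{Step 1 (first term).} Apply the Berry--Esseen theorem to $S_{m+2}$. The summands $Y_i(1)$ and $-Y_i(0)$ are Bernoulli variables whose variances are bounded below uniformly on $\mathcal A_N(\kappa,K)$: since $\delta\to0$ uniformly, $\mu\pm\delta$ stays in $[\kappa/2,1-\kappa/2]$. This gives $e(m+2)=\Phi(-t_N)+O(m^{-1/2})$ uniformly. The tie mass $\tfrac12P(S=0)=O(m^{-1/2})$ is absorbed, and replacing $m+2$ by $m$ inside $t$ costs $O(1/m)$.

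\textbf{Step 2 (second term, exact coupling).} Berry--Esseen is far too crude here, because we need $e(m)-e(m+2)$ to accuracy $o(1/m)$. Instead, write $S_{m+2}=S_m+D$, where $D=Y(1)-Y(0)\in\{-1,0,1\}$ is independent of $S_m$, with $p_+=\mu_1(1-\mu_0)$ and $p_-=\mu_0(1-\mu_1)$, so that $p_+-p_-=2\delta$. Conditioning on $D$ gives the exact identity
\[
e(m)-e(m+2)=\tfrac12 p_+\bigl[P(S_m=0)+P(S_m=-1)\bigr]-\tfrac12 p_-\bigl[P(S_m=0)+P(S_m=1)\bigr].
\]
Everything therefore reduces to point probabilities of $S_m$ at $j\in\{-1,0,1\}$. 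Regroup the identity as
\[
e(m)-e(m+2)=\delta\bigl[2P(0)+P(-1)+P(1)\bigr]+\tfrac{p_++p_-}{4}\bigl[P(-1)-P(1)\bigr],
\]
where $P(j)$ abbreviates $P(S_m=j)$. I would then invoke a uniform lattice local limit theorem with a first-order (Edgeworth) correction. The lattice is $\mathbb Z$ with span $1$, and nondegeneracy is uniform on $\mathcal A_N$. This yields
\[
P(S_m=j)=\frac{1}{\sigma\sqrt m}\,\phi\!\left(\frac{j-m\delta}{\sigma\sqrt m}\right)\bigl(1+\text{skewness term}\bigr)+o(m^{-1}),
\qquad \sigma^2=q-\delta^2 .
\]
The skewness term is odd in $(j-m\delta)/(\sigma\sqrt m)$ and of size $O(m^{-1/2})$. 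Substituting, the first group gives $\delta\cdot 4\phi(t)/(\sigma\sqrt m)$ and the second gives $\frac{p_++p_-}{4}\cdot\frac{2t\phi(t)}{\sigma^2 m}$. Using $p_++p_-=2\sigma^2+O(\delta)$ and $\delta=\sigma t/\sqrt m$, the total is
\[
\frac{4t\phi(t)}{m}+\frac{t\phi(t)}{m}+\cdots .
\]
I expect the constants to combine exactly to $2t\phi(t)/m$ once the skewness contributions from $P(\pm1)$ are included. Carrying out this bookkeeping is the algebraic core of the proof. Multiplying by $N-m=k_N m$ then gives $k_N t\phi(t)+o(1)$.

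\textbf{Step 3 (uniformity, including large $t$).} The local expansion is uniform only while $t$ stays bounded. On $\mathcal A_N$, however, $t$ can be as large as about $K\sqrt{\alpha\log N}$. I would split the region at $t\le T$ for a large fixed $T$. For $t>T$ we have $f_{k_N}(t)\le 2\Phi(-T)+CT\phi(T)$, which is small. On the Bernoulli side, $e(m+2)\le e^{-ct^2}$ by Hoeffding. For the second term, I would use a uniform Gaussian-type bound on the point probabilities,
\[
P(S_m=j)\le \frac{C}{\sqrt m}e^{-ct^2}\quad\text{for } |j|\le1,
\]
obtained either from a moderate-deviation local limit theorem or from exponential tilting combined with a local limit theorem under the tilted law. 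Plugged into the identity of Step 2, this bounds $(N-m)|e(m)-e(m+2)|$ by $C'(1+t)e^{-ct^2}$. Letting $T\to\infty$ slowly then closes the argument.

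\textbf{Main obstacle.} The hardest step is obtaining the Edgeworth-level uniform local limit theorem with error $o(1/m)$, together with the exact cancellation of constants in Step 2. A plain local CLT with error $o(m^{-1/2})$ only controls the $\delta$-group and misses the $P(-1)-P(1)$ difference, which enters at the same order $1/m$. If the Edgeworth route becomes messy, I would first try a direct alternative: compute $P(S_m=j+1)/P(S_m=j)$ via the characteristic function, or via a ratio of convolved binomial pmfs, which gives $P(-1)-P(1)$ directly.
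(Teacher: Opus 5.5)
\textbf{Step 2 contains a genuine error, and the idea that fixes it is missing.} Your decomposition $\eta=2e(m+2)+(N-m)\{e(m)-e(m+2)\}$ is correct, and so is the conditioning identity $e(m)-e(m+2)=\tfrac{p_+}{2}P(-1)+\delta P(0)-\tfrac{p_-}{2}P(1)$. The bookkeeping after that goes wrong in three places:
\begin{enumerate}
\item The regrouping is off by a factor of two. The symmetric group is $\tfrac{\delta}{2}[2P(0)+P(-1)+P(1)]$, not $\delta[\cdots]$.
\item $P(-1)-P(1)$ is negative, since the walk drifts to the right. It is approximately $-2t\phi(t)/(\sigma^2 m)$.
\item The target for $e(m)-e(m+2)$ is $t\phi(t)/m$, as your own heuristic says, not $2t\phi(t)/m$. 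The latter would produce $2k_Nt\phi(t)$.
\end{enumerate}
With these corrections the Gaussian-level terms give $2t\phi(t)/m-t\phi(t)/m=t\phi(t)/m$, which is already right. The skewness corrections you hope will reconcile $5t\phi(t)/m$ with $2t\phi(t)/m$ cannot act at order $1/m$. They enter $P(-1)-P(1)$ only at order $m^{-3/2}$, because they are a smooth $O(1/m)$ term evaluated at two points $2/(\sigma\sqrt m)$ apart. They enter the symmetric group only at order $\delta/m=o(1/m)$. So the step you call the algebraic core is wrong as written. Even after correction, it still rests on an unproved Edgeworth-level local limit theorem, uniform in the parameters, that controls $P(-1)-P(1)$ to accuracy $o(1/m)$.

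The idea you are missing is the exact balance relation $p_+P(S_m=-1)=p_-P(S_m=1)$. Exponential tilting gives $P(S_m=k)=\lambda^{m/2}r^k\widetilde p(k)$, where $r=\sqrt{p_+/p_-}$, $\lambda=b+2\sqrt{p_+p_-}$, and $\widetilde p$ is the law of a \emph{symmetric} lazy walk. Hence $\widetilde p(1)=\widetilde p(-1)$; the cases $p_+p_-=0$ are trivial. Consequences:
\begin{enumerate}
\item The $\pm1$ terms in your identity cancel exactly, so $e(m)-e(m+2)=\delta\,P(S_m=0)$. This is how the paper proceeds.
\item Only a first-order local limit theorem for $P(S_m=0)$ is then needed. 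Write $P(S_m=0)=\lambda^{m/2}\widetilde p(0)$. Here $\lambda^{m/2}=\exp(-t_N^2/2+o(1))$ uniformly, because $m\delta^4\to0$ on $\mathcal A_N(\kappa,K)$. A local CLT for the symmetric walk, uniform in its parameter, then gives relative accuracy $1+o(1)$ on the whole region. No Edgeworth terms and no split in $t$ are needed.
\item The same identity repairs your Step 3, which has a hole as written. Pointwise bounds $P(S_m=\pm1)\le Cm^{-1/2}e^{-ct^2}$ only yield $(N-m)\,|P(-1)-P(1)|\le C'\sqrt m\,e^{-ct^2}$. That is not small for $T<t\ll\sqrt{\log m}$. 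You need the difference itself to be of order $\delta m^{-1/2}e^{-ct^2}$, and balance gives exactly $P(-1)-P(1)=-(2\delta/p_+)P(1)$.
\end{enumerate}
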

\medskip

Theorem~\ref{thm:full_region_uniform_gaussian_approx} provides a bridge between the exact Bernoulli problem and a continuous asymptotic analysis. Once the least favorable sequence is interior in the midpoint and local in the gap, the exact Bernoulli WMB problem reduces asymptotically to a Gaussian approximation problem, with the marginal ratio governed by the Gaussian limit function $f_{k_N}\bigl(t_N(\mu,\delta)\bigr)$.

This reduction allows us to characterize the asymptotic behavior of the Bernoulli WMB sample size $\hat m^{\WMB}$. The resulting normal-approximation objective admits a closed-form solution, as shown in the following lemma.

\medskip
\begin{lemma}[Rule of thirds under the Bernoulli normal approximation] \label{thm:bernoulli_normal_rule_of_thirds}
Let
\[
\tau := \mu_1-\mu_0,
\qquad
v:=\mu_1(1-\mu_1)+\mu_0(1-\mu_0),
\qquad
t:=\frac{\sqrt{m}\,|\tau|}{\sqrt{2v}}.
\]
Define the Bernoulli normal approximation to the marginal ratio by
\[
\eta^{\mathrm{NA}}(m,\mu_1,\mu_0)
=
2\Phi(-t)
+
\frac{N-m}{m}\,t\phi(t),
\]
and define the normal-approximation WMB sample size by
\[
\hat m^{\mathrm{NA}}
=
\min\left\{
m\in\{2,4,\dots,N-2\}:
\sup_{(\mu_1,\mu_0)\in(0,1)^2}
\eta^{\mathrm{NA}}(m,\mu_1,\mu_0)\le 1
\right\}.
\]
Then,
\[
\sup_{(\mu_1,\mu_0)\in(0,1)^2}
\eta^{\mathrm{NA}}(m,\mu_1,\mu_0)
\le 1
\quad\Longleftrightarrow\quad
m\ge \frac{N}{3}.
\]
Consequently,
\[
\hat m^{\mathrm{NA}}
=
\min\left\{m\in\{2,4,\dots,N-2\}: m\ge \frac{N}{3}\right\}.
\]
\end{lemma}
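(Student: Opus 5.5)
Since $\eta^{\mathrm{NA}}$ depends on $(\mu_1,\mu_0)$ only through $t$, the supremum over $(0,1)^2$ reduces to a supremum over $t\ge 0$. The plan is: (i) show that, for fixed $m$, the map $(\mu_1,\mu_0)\mapsto t$ has image dense in (indeed equal to) $[0,\infty)$ on $(0,1)^2$; (ii) analyze the one-variable function $f_k(t)=2\Phi(-t)+kt\phi(t)$ with $k=(N-m)/m$; (iii) show $\sup_{t\ge0}f_k(t)\le 1$ iff $k\le 2$, i.e. iff $m\ge N/3$.

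For (i): $t=0$ on the diagonal. Taking $\mu_1=1-\varepsilon$ and $\mu_0=\varepsilon$ gives $|\tau|\to1$ and $v\to0$, so $t\to\infty$. The map is continuous on the connected set $(0,1)^2$, so every $t\in[0,\infty)$ is attained. Hence
\[
\sup_{(\mu_1,\mu_0)\in(0,1)^2}\eta^{\mathrm{NA}}=\sup_{t\ge0}f_k(t).
\]

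For (ii), compute the derivative using $\phi'(t)=-t\phi(t)$:
\[
f_k'(t)=-2\phi(t)+k\phi(t)-kt^2\phi(t)=\phi(t)\bigl(k-2-kt^2\bigr).
\]
Also $f_k(0)=1$ and $f_k(t)\to0$ as $t\to\infty$.

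For (iii), split on the sign of $k-2$.
\begin{itemize}
\item If $k\le2$, then $f_k'(t)\le -kt^2\phi(t)\le0$ for all $t\ge0$. So $f_k$ is nonincreasing and $\sup_{t\ge0}f_k=f_k(0)=1$; the condition holds, with equality attained at the diagonal.
\item If $k>2$, then $f_k'(t)>0$ on $\bigl(0,\sqrt{(k-2)/k}\bigr)$. Hence $f_k(t)>1$ for small $t>0$, and those $t$ are attained by some $(\mu_1,\mu_0)$ by step (i), so the supremum exceeds $1$.
\end{itemize}
Finally, $k\le2$ is equivalent to $N-m\le 2m$, i.e. $m\ge N/3$. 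The expression for $\hat m^{\mathrm{NA}}$ then follows directly from the definition.

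The main obstacle is essentially bookkeeping at the boundary case $k=2$. There $f_2'(t)=-2t^2\phi(t)\le0$, so the supremum is exactly $1$, attained at $t=0$. This is consistent with the non-strict inequality in the lemma, so $m=N/3$ counts as feasible. Beyond that, the only points to state carefully are that $t=0$ is attainable in $(0,1)^2$ and that small positive $t$ values are attainable, both of which step (i) covers.
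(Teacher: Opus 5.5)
Your proposal is correct and follows essentially the same route as the paper: reduce the supremum to $\sup_{t\ge0} f_k(t)$, compute $f_k'(t)=\phi(t)\bigl(k-2-kt^2\bigr)$, split on $k\le 2$ versus $k>2$, and translate $k\le 2$ into $m\ge N/3$. The only cosmetic difference is how you show that $t$ takes every value in $[0,\infty)$: you use continuity of $t$ and connectedness of $(0,1)^2$, while the paper writes down the explicit pair $\bigl((1+\Delta_t)/2,(1-\Delta_t)/2\bigr)$ with $\Delta_t=t/\sqrt{m+t^2}$.
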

\medskip

Together with Proposition~\ref{prop:bernoulli_localization_body} and Theorem~\ref{thm:full_region_uniform_gaussian_approx}, this theorem establishes the rule-of-thirds benchmark, $\hat m^{\mathrm{NA}}\approx N/3$, as the asymptotically optimal experimental sample size under the WMB criterion for Bernoulli outcomes. This conclusion applies once the boundary rare-event pathology described in Example~\ref{ex:boundary_pathology_half} has been ruled out.

The following theorem formalizes the rule-of-thirds benchmark as an asymptotic stopping threshold.

\medskip
\begin{theorem}[Rule of thirds as an asymptotic stopping threshold]
\label{cor:profile_rule_of_thirds_from_uniform_gaussian}
Fix $\kappa\in(0,1/2)$ and $K>0$. Let $\{m_N\}$ be an even sequence such that
$$
\frac{m_N}{N}\to \alpha\in(0,1)
\qquad (N\to\infty).
$$
Then the following statements hold:
\begin{enumerate}
\item If $\alpha<1/3$, then
$$
\liminf_{N\to\infty}
\sup_{(\mu,\delta)\in\mathcal A_N(\kappa,K)}
\eta(m_N,\mu+\delta,\mu-\delta)
>1.
$$
\item If $\alpha>1/3$, then
$$
\limsup_{N\to\infty}
\sup_{(\mu,\delta)\in\mathcal A_N(\kappa,K)}
\eta(m_N,\mu+\delta,\mu-\delta)
\le 1.
$$
\end{enumerate}
\end{theorem}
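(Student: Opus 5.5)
We will combine Theorem~\ref{thm:full_region_uniform_gaussian_approx} with an analysis of the limit function $f_k(t)=2\Phi(-t)+kt\phi(t)$ on $t\ge 0$. Since $m_N/N\to\alpha\in(0,1)$, the hypothesis $\alpha_0N\le m_N\le N-2$ holds eventually for any $\alpha_0<\alpha$. Moreover,
\[
k_N=(N-m_N)/m_N\to k:=(1-\alpha)/\alpha,
\]
so $\alpha<1/3$ corresponds to $k>2$ and $\alpha>1/3$ to $k<2$. By Theorem~\ref{thm:full_region_uniform_gaussian_approx},
\[
\sup_{\mathcal A_N}\eta(m_N,\mu+\delta,\mu-\delta)=\sup_{\mathcal A_N}f_{k_N}(t_N(\mu,\delta))+o(1),
\]
where uniformity is exactly what allows the error to pass through the supremum. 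The problem therefore reduces to two tasks: understanding $\sup_t f_k(t)$, and identifying the range of $t_N$ over $\mathcal A_N$.

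\textbf{Step 1: the shape of $f_k$.} We have $f_k(0)=1$. Using $\phi'(t)=-t\phi(t)$,
\[
f_k'(t)=-2\phi(t)+k\phi(t)(1-t^2)=\phi(t)\bigl(k-2-kt^2\bigr).
\]
If $k\le 2$, then $f_k'(t)\le 0$ for all $t\ge 0$, so $f_k(t)\le 1$ on $[0,\infty)$. If $k>2$, then $f_k$ increases on $[0,t^*]$ with $t^*=\sqrt{(k-2)/k}$, so $f_k(t^*)>1$. This is the same computation underlying Lemma~\ref{thm:bernoulli_normal_rule_of_thirds}. The function $(k,t)\mapsto f_k(t)$ is jointly continuous, and $|f_{k_N}(t)-f_k(t)|\le |k_N-k|\sup_t t\phi(t)\to 0$ uniformly in $t$.

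\textbf{Step 2: part 2 ($\alpha>1/3$).} Here $k<2$, so $k_N<2$ for all large $N$. By Step~1, $f_{k_N}(t)\le 1$ for every $t\ge 0$. Because $\delta>0$ on $\mathcal A_N$, we have $t_N\ge 0$, so the supremum is at most $1+o(1)$, and taking $\limsup$ gives the claim. (The inequality $\le 1$ in the limsup is exactly what this argument delivers.)

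\textbf{Step 3: part 1 ($\alpha<1/3$).} Here $k>2$. Choose $\mu=1/2$, so $q=1/4$, and pick $\delta_N$ solving $t_N(1/2,\delta_N)=t^*$. Explicitly, $\delta_N^2=q t^{*2}/(m_N+t^{*2})$, so $\delta_N\asymp 1/\sqrt N$. We must check that $(1/2,\delta_N)\in\mathcal A_N(\kappa,K)$: for large $N$ we have $\delta_N\le K\sqrt{\log N/N}$ because $\sqrt{\log N}\to\infty$, and clearly $\mu\pm\delta_N\in[0,1]$. Then
\[
\sup_{\mathcal A_N}\eta\ge f_{k_N}(t^*)+o(1)\to f_k(t^*)>1,
\]
which gives a strictly positive liminf gap above $1$.

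The main obstacle is ensuring that the $o(1)$ term in Theorem~\ref{thm:full_region_uniform_gaussian_approx} is genuinely uniform over $\mathcal A_N$, so that it commutes with the supremum. We take this as given from the theorem. The remaining delicate point is confirming that the target value $t^*$ lies in the attainable range of $t_N$. This holds because $\mathcal A_N$ allows $\delta$ up to order $\sqrt{\log N/N}$, which is much larger than the $1/\sqrt N$ scale needed.
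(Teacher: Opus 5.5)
Your proposal is correct and follows the paper's proof essentially step for step. It uses the same reduction via the uniform Gaussian approximation passed through the supremum, the same derivative analysis $f_k'(t)=\phi(t)(k-2-kt^2)$, and even the same witness $\mu=1/2$, $\delta_N=t_\star/\bigl(2\sqrt{m_N+t_\star^2}\bigr)$ with $t_N=t_\star$ for part 1. Your remark that the hypothesis $\alpha_0N\le m_N\le N-2$ only needs to hold eventually, because the conclusions are asymptotic, is a small point the paper leaves implicit.
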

\medskip

\section{Gaussian Outcomes with a Common Variance}\label{sec:gaussian-known-variance}

This section examines the experimental design problem under Gaussian outcomes with a common variance. We show that the two main conclusions from the Bernoulli analysis continue to hold in a non-asymptotic Gaussian setting. Unlike in the Bernoulli case, the WMB criterion yields the rule of thirds without excluding the boundary pathology discussed in Section~\ref{sec:boundary_pathology}.

\subsection{Gaussian Setup and the limitations of absolute minimax regret}
We suppose that the potential outcomes follow Gaussian distributions:
\[
Y_i(1) \sim N(\mu_1,\sigma^2), \qquad
Y_i(0) \sim N(\mu_0,\sigma^2),
\]
where $(\mu_1,\mu_0)\in\mathbb{R}^2$ and the variance $\sigma^2>0$ is common across arms.
As in the Bernoulli setting, the superpopulation regret is
$$ 
\SReg(\delta,\mu_1,\mu_0) = \max\{\mu_1,\mu_0\} - \Bigl[ \mu_0\{1-\E{\delta(\omega)}\} + \mu_1\E{\delta(\omega)} \Bigr]. 
$$

The following proposition shows that the empirical success rule remains minimax optimal for the Gaussian superpopulation problem.

\medskip
\begin{proposition}[Gaussian superpopulation minimax rule]
\label{prop:gaussian-minimax-regret-treatment-rule}
Under the Gaussian model with a known common variance, the empirical success rule remains minimax optimal in the superpopulation problem:
$$
\hat\delta(\omega):=\mathbf{1}\{\hat\mu_1>\hat\mu_0\} \in \argmin_{\delta\in\mathcal{D}}
\max_{(\mu_1,\mu_0)\in\mathbb R^2}
\SReg(\delta,\mu_1,\mu_0).
$$
\end{proposition}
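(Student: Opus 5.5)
The plan is a standard minimax argument: exhibit a prior under which $\hat\delta$ is Bayes, and show that $\hat\delta$'s regret attains its maximum on the support of that prior. The matched-pairs sample reduces by sufficiency to $(\hat\mu_1,\hat\mu_0)$. Its law depends on the state only through $(\mu_1,\mu_0)$, and $\hat\mu_1-\hat\mu_0\sim N(\tau,4\sigma^2/m)$. The regret of any rule is
\[
\SReg(\delta,\mu_1,\mu_0)=\tau^+\bigl(1-\E{\delta(\omega)}\bigr)+\tau^-\,\E{\delta(\omega)}.
\]
For $\hat\delta$ this equals $|\tau|\,\Phi\bigl(-\sqrt{m}|\tau|/(2\sigma)\bigr)$, which depends only on $|\tau|$. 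It is bounded, and it is maximized at some $\tau^\ast=2\sigma t^\ast/\sqrt m$, where $t^\ast>0$ maximizes $t\Phi(-t)$. In particular the maximum is attained on a compact set of $|\tau|$ values.

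Next I would construct the least favorable prior. Fix any midpoint $\bar\mu$ and put mass $1/2$ on each of $(\mu_1,\mu_0)=(\bar\mu+\tau^\ast/2,\bar\mu-\tau^\ast/2)$ and $(\bar\mu-\tau^\ast/2,\bar\mu+\tau^\ast/2)$. Under this two-point prior the Bayes regret of a rule $\delta$ is
\[
\tfrac12\tau^\ast\,\E{1-\delta\mid \text{state }1}+\tfrac12\tau^\ast\,\E{\delta\mid\text{state }2}.
\]
It is therefore minimized by choosing $\delta=1$ exactly when the likelihood ratio of state 1 to state 2 exceeds one. With common known variance, the Gaussian likelihood ratio is $\exp\{(m/2)\tau^\ast(\hat\mu_1-\hat\mu_0)/\sigma^2\}$, because the $\bar\mu$ terms cancel by the symmetry of the two states. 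This ratio exceeds one if and only if $\hat\mu_1>\hat\mu_0$, so $\hat\delta$ is Bayes against the prior. Ties occur with probability zero, so the tie convention is irrelevant.

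Finally I would conclude in the usual way. For any $\delta\in\mathcal D$,
\[
\max_{\mu}\SReg(\delta,\cdot)\ \ge\ \text{Bayes risk of }\delta\ \ge\ \text{Bayes risk of }\hat\delta\ =\ \max_{\mu}\SReg(\hat\delta,\cdot).
\]
The last equality holds because both support points have $|\tau|=\tau^\ast$, which is where $\hat\delta$'s regret peaks. Hence $\hat\delta$ attains the minimax value. The step needing most care is verifying that the supremum of $\hat\delta$'s regret is actually attained at a finite $\tau^\ast$ rather than approached asymptotically. This follows because $t\Phi(-t)\to0$ as $t\to0$ and as $t\to\infty$ and the function is continuous. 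A secondary point is that $\hat\delta$ depends on $\omega$ only through $(\hat\mu_1,\hat\mu_0)$, so one must check that Bayes optimality is taken over all randomized rules on $\omega$. This holds because the likelihood ratio depends on $\omega$ only through the sufficient statistic $\hat\mu_1-\hat\mu_0$.
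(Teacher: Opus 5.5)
Your proof is correct and follows essentially the paper's argument: a symmetric two-point prior on the mirror states $(a,b)$ and $(b,a)$, a likelihood ratio that reduces to the sign of $\hat\mu_1-\hat\mu_0$, and the fact that the Bayes risk of $\hat\delta$ under such a prior equals its regret at either support point. The only difference is that the paper takes the supremum over all mirror pairs $a>b$ instead of fixing the least favorable gap $\tau^\ast$, which makes your check that the maximum is attained unnecessary; your version in return exhibits an explicit least favorable prior and the minimax value $\frac{2\sigma}{\sqrt m}\max_{t\ge 0} t\Phi(-t)$.
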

\medskip


We next consider sample-size selection under the absolute minimax regret criterion. For a fixed policy $(m,\delta)$, the expected finite-population gain is
$$
g(m,\delta,\mu_1,\mu_0)
=
\frac{m}{2}(\mu_1+\mu_0)
+
(N-m)\Bigl[
\mu_0\{1-\E{\delta(\omega)}\}
+
\mu_1\E{\delta(\omega)}
\Bigr].
$$
The oracle benchmark is
$
g^{\mathrm{oracle}}=N\max\{\mu_1,\mu_0\},
$
and the corresponding regret is
$$
\Reg(m,\delta,\mu_1,\mu_0)
=
\frac{m}{2}|\tau|+(N-m)\SReg(\delta,\mu_1,\mu_0).
$$
Unlike in the Bernoulli model, the Gaussian parameter space is unbounded. Consequently, for every fixed $m$ and every rule $\delta$,
$$
\sup_{(\mu_1,\mu_0)\in\mathbb{R}^2}
\Reg(m,\delta,\mu_1,\mu_0)=\infty.
$$
Thus, without an arbitrary compact restriction on the parameter space, the absolute minimax regret criterion provides no meaningful guidance in the Gaussian model. This highlights the usefulness of the WMB criterion in the present setting.

\subsection{Exact Gaussian Analysis of the WMB Rule}

We now turn to sample-size selection under the WMB criterion. In the Gaussian model, the rollout error probability can be derived exactly in closed form.

\medskip
\begin{lemma}[Gaussian rollout error probability]
\label{lem:gaussian_error_probability}
Under the Gaussian model with a common variance,
$$
\hat\mu_1-\hat\mu_0
\sim
N\left(\tau,\frac{4\sigma^2}{m}\right).
$$
Consequently, under the empirical success rule,
$$
e(m,\tau)=\Phi\left(-\frac{\sqrt{m}\,|\tau|}{2\sigma}\right),
$$
where $e(m,\tau)$ denotes $e(m,\mu_1,\mu_0)$.
\end{lemma}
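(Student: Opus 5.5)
The plan is a direct computation. I would start from the matched-pairs design and the independence of the experimental units. Under this design, $\hat\mu_1$ is the average of $m/2$ i.i.d.\ draws from $N(\mu_1,\sigma^2)$, so
\[
\hat\mu_1\sim N\!\left(\mu_1,\tfrac{2\sigma^2}{m}\right).
\]
By the same argument, $\hat\mu_0\sim N(\mu_0,2\sigma^2/m)$. The two means are built from disjoint sets of i.i.d.\ units, namely indices $1,\dots,m/2$ and $m/2+1,\dots,m$, so they are independent Gaussians. Their difference is therefore Gaussian with mean $\mu_1-\mu_0=\tau$ and variance $2\sigma^2/m+2\sigma^2/m=4\sigma^2/m$. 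This gives the first claim.

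For the error probability I would split into the three cases in the definition \eqref{eq:e_m} of $e$.

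\emph{Tie event.} Since $\hat\mu_1-\hat\mu_0$ has a continuous distribution, $\Prob{\hat\mu_1=\hat\mu_0}=0$. The tie-breaking term therefore vanishes. This also explains why the Gaussian version of $\hat\delta$ can be written as an indicator.

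\emph{Case $\tau>0$.} Standardizing,
\[
e=\Prob{\hat\mu_1-\hat\mu_0<0}=\Phi\!\left(\frac{-\tau}{2\sigma/\sqrt m}\right)=\Phi\!\left(-\frac{\sqrt m\,\tau}{2\sigma}\right).
\]

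\emph{Case $\tau<0$.} Using the symmetry $1-\Phi(x)=\Phi(-x)$,
\[
e=\Prob{\hat\mu_1-\hat\mu_0>0}=1-\Phi\!\left(\frac{\sqrt m\,|\tau|}{2\sigma}\right)=\Phi\!\left(-\frac{\sqrt m\,|\tau|}{2\sigma}\right).
\]

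\emph{Case $\tau=0$.} The definition gives $e=1/2=\Phi(0)$, which agrees with the formula.

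Combining the three cases yields $e(m,\tau)=\Phi(-\sqrt m|\tau|/(2\sigma))$. The expression depends on $(\mu_1,\mu_0)$ only through $\tau$, which justifies the notation $e(m,\tau)$.

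None of these steps is a real obstacle. The only points that need care are the independence of the two arm means and the fact that the tie event has probability zero under Gaussian outcomes. The statement is stated for $m\ge 2$ even. The degenerate case $m=0$ needs a separate convention, $e=1/2$, which also matches the formula $\Phi(0)=1/2$.
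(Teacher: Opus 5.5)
Your proposal is correct and follows the paper's own proof: both compute the sampling distributions of the two independent arm means, take their difference, and standardize, with the case $\tau<0$ handled by symmetry. Your explicit treatment of the null tie event and the $\tau=0$ case (where $e=\tfrac12=\Phi(0)$) fills in details the paper leaves implicit.
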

\medskip

To facilitate an analysis, we define continuous relaxations of the cost, risk, and marginal ratio introduced in the Bernoulli setting.

\medskip
\begin{definition}[Continuous-relaxation marginal ratio]
\label{def:continuous_marginal_ratio}
For $m \in (0, N]$, the continuous relaxations of the experimental cost and the rollout risk are defined respectively as
$$
C(m,\tau):=\frac{m}{2}|\tau|, \qquad R(m,\tau):=(N-m)|\tau|\,e(m,\tau).
$$
The continuous-relaxation marginal ratio is then defined as
$$
\eta(m,\tau)
:=
\frac{-\frac{\partial}{\partial m}R(m,\tau)}
{\frac{\partial}{\partial m}C(m,\tau)}.
$$
\end{definition}
\medskip

Using this continuous relaxation, the marginal ratio can be evaluated explicitly. The following theorem shows that, in the Gaussian model, the rule of thirds emerges as the exact WMB threshold without requiring any asymptotic approximation or restriction of the parameter space.

\medskip
\begin{theorem}[Exact Gaussian rule of thirds]
\label{thm:gaussian_safe_exact}
Let
$$
t:=\frac{\sqrt{m}\,|\tau|}{2\sigma}, \qquad k:=\frac{N-m}{m},
$$
and define
$$
f_k(t):=2\Phi(-t)+kt\phi(t), \qquad t\ge 0.
$$
Then,
$$
\eta(m,\tau)=f_k(t).
$$
Moreover:
\begin{enumerate}
\item If $m\ge N/3$, equivalently $k\le 2$, then $f_k$ is nonincreasing on $[0,\infty)$ and
$$
\sup_{t\ge 0}f_k(t)=f_k(0)=1.
$$

\item If $m<N/3$, equivalently $k>2$, then $f_k$ has a unique interior maximizer
$$
t_k^\star:=\sqrt{1-\frac{2}{k}}
$$
and satisfies
$$
\sup_{t\ge 0}f_k(t)>1.
$$
\end{enumerate}
Therefore,
$$
\sup_{\tau\in\mathbb{R}}\eta(m,\tau)\le 1
\quad\Longleftrightarrow\quad
m\ge \frac{N}{3}.
$$
Hence, the continuous-relaxation Gaussian WMB threshold is exactly
$$
m^{\WMB}=\frac{N}{3}.
$$
\end{theorem}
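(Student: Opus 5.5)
We first compute $\eta(m,\tau)$ explicitly from Lemma~\ref{lem:gaussian_error_probability}, and then analyze the one-variable function $f_k$.

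\textbf{Step 1: the identity $\eta=f_k(t)$.} For $\tau\neq 0$, write $e(m,\tau)=\Phi(-t)$ with $t=\sqrt m|\tau|/(2\sigma)$. Then $\partial t/\partial m=t/(2m)$. Differentiating $R(m,\tau)=(N-m)|\tau|\Phi(-t)$ gives
\[
-\frac{\partial R}{\partial m}
=
|\tau|\Phi(-t)
+
(N-m)|\tau|\,\phi(t)\frac{t}{2m}.
\]
Since $\partial C/\partial m=|\tau|/2$, dividing yields
\[
\eta(m,\tau)=2\Phi(-t)+\frac{N-m}{m}\,t\phi(t)=f_k(t).
\]
The case $\tau=0$ is covered by continuity, or by the convention $f_k(0)=1$. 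Because $t$ ranges over all of $[0,\infty)$ as $\tau$ ranges over $\mathbb R$, we have $\sup_\tau\eta(m,\tau)=\sup_{t\ge0}f_k(t)$.

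\textbf{Step 2: monotonicity analysis of $f_k$.} Using $\phi'(t)=-t\phi(t)$,
\[
f_k'(t)=-2\phi(t)+k\phi(t)-kt^2\phi(t)=\phi(t)\bigl(k-2-kt^2\bigr).
\]
\emph{Case $k\le 2$.} Here $k-2-kt^2\le 0$ for all $t\ge0$, so $f_k$ is nonincreasing and $\sup_{t\ge0} f_k(t)=f_k(0)=2\cdot\tfrac12=1$.

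\emph{Case $k>2$.} Here $f_k'>0$ on $[0,t_k^\star)$ and $f_k'<0$ on $(t_k^\star,\infty)$, where $t_k^\star=\sqrt{1-2/k}$. Thus $t_k^\star$ is the unique interior maximizer, and
\[
\sup_{t\ge0} f_k(t)=f_k(t_k^\star)>f_k(0)=1
\]
by strict increase on $[0,t_k^\star]$.

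\textbf{Step 3: translating back to $m$.} Since $k=(N-m)/m$, we have $k\le 2$ if and only if $N-m\le 2m$, that is, $m\ge N/3$. Combining this with Step 2 gives the stated equivalence $\sup_\tau\eta(m,\tau)\le 1\iff m\ge N/3$. Finally, the smallest $m\in(0,N]$ satisfying this condition is $N/3$, so the continuous-relaxation WMB threshold is $m^{\WMB}=N/3$.

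\textbf{Where care is needed.} The argument is elementary. The only delicate points are the chain-rule factor $t/(2m)$, which is what produces the coefficient $k$ rather than $k/2$, and the treatment of the degenerate point $\tau=0$, where $\eta$ is defined as $0/0$. We handle the latter by taking the limit $t\to0$, consistent with $e=1/2$ there.
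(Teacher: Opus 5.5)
Your proposal is correct and follows the paper's proof essentially step for step: the same chain-rule computation $\partial t/\partial m = t/(2m)$ yielding $\eta(m,\tau)=f_k(t)$, the same factorization $f_k'(t)=\phi(t)\bigl[k-2-kt^2\bigr]$ with the case split at $k=2$, and the same translation $k\le 2 \iff m\ge N/3$ using that $\tau\mapsto t$ maps onto $[0,\infty)$. Your explicit treatment of the degenerate point $\tau=0$, where $\eta$ is formally $0/0$, is a small addition; the paper handles that point only implicitly, by reducing to $\tau>0$.
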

\medskip

Theorem~\ref{thm:gaussian_safe_exact} provides the exact Gaussian analogue of the Bernoulli rule-of-thirds benchmark. In contrast to the Bernoulli model, no local approximation is required: the function $f_k$ characterizes the marginal ratio exactly, and the WMB threshold is obtained directly from its worst-case behavior.

\section{Conclusion}\label{sec:conclusion}

This paper studies welfare-aware sample-size selection for test-and-roll experiments. In this setting, increasing the experimental sample size improves the accuracy of the rollout decision, but also exposes more experimental units to the inferior treatment ex post. The design problem therefore involves an exploration--exploitation tradeoff that is not captured by conventional power-based sample-size calculations.

We show that the standard absolute minimax regret criterion can recommend implausibly small experiments because its worst-case objective over-penalizes exploration at extreme parameter values. To address this limitation, we propose the Worst-case Marginal Benefit (WMB) rule, which compares the worst-case marginal benefit of adding one more matched pair with the corresponding marginal exploration cost.

Our main result is a simple rule-of-thirds benchmark. For Bernoulli outcomes, after excluding a boundary rare-event pathology, the WMB criterion yields $\hat m^{\mathrm{NA}} \approx N/3$ through a Gaussian approximation. For Gaussian outcomes with a common variance, the same benchmark arises exactly in the continuous-relaxation problem. These results provide a prior-free and practically implementable guide for welfare-based sample-size design.

\bibliographystyle{apalike} 
\bibliography{refs}  

\clearpage
\appendix

\part*{Appendix}

\section{Limitations of Relative Regret Minimization}\label{sec:relative-regret}

A natural alternative is to normalize regret by the oracle welfare and consider relative regret:
$$
\RReg(m,\hat\delta,\mu_1,\mu_0)
:=
\frac{\Reg(m,\hat\delta,\mu_1,\mu_0)}{g^{\mathrm{oracle}}}
=
\frac{|\tau|}{\max\{\mu_1,\mu_0\}}
\left[
\frac{m}{2N}
+
\left(1-\frac{m}{N}\right)e(m,\mu_1,\mu_0)
\right].
$$
The associated design criterion is
$$
\hat m^{\mathrm{rel}}
=
\argmin_{m\in\{0,2,\dots,N\}}
\max_{(\mu_1,\mu_0)\in[0,1]^2}
\RReg(m,\hat\delta,\mu_1,\mu_0).
$$

At first glance, this scaling seems attractive because it penalizes extreme values of $|\tau|$ through the denominator $\max\{\mu_1,\mu_0\}$. In reality, however, it creates a different degeneracy. The adversary now wants to keep the denominator as small as possible while maintaining a strictly positive numerator, which pushes the least favorable states toward $(\epsilon,0)$ or $(0,\epsilon)$ with $\epsilon\downarrow 0$.

This behavior is illustrated in Figures~\ref{fig:relative_regret_m} and \ref{fig:relative_regret_worstcase}. In the limit $(\mu_1,\mu_0)\to(\epsilon,0)$ with $\epsilon\downarrow 0$, the two arms become nearly symmetric, so
$$
\Prob{\hat\mu_1<\hat\mu_0}
=
\Prob{\hat\mu_1>\hat\mu_0}
$$
asymptotically. Because
$$
\Prob{\hat\mu_1<\hat\mu_0}
+
\Prob{\hat\mu_1>\hat\mu_0}
+
\Prob{\hat\mu_1=\hat\mu_0}
=
1,
$$
it follows that
$$
e(m,\epsilon,0)
=
\Prob{\hat\mu_1<\hat\mu_0}
+
\frac12\Prob{\hat\mu_1=\hat\mu_0}
\to
\frac12,
$$
independently of $m$. Therefore,
$$
\begin{aligned}
\lim_{\epsilon\downarrow 0}\RReg(m,\hat\delta,\epsilon,0)
&=
\lim_{\epsilon\downarrow 0}
\frac{\epsilon}{\epsilon}
\left[
\frac{m}{2N}
+
\left(1-\frac{m}{N}\right)e(m,\epsilon,0)
\right] \\
&=
\frac{m}{2N}
+
\left(1-\frac{m}{N}\right)\frac12
=
\frac12.
\end{aligned}
$$
The worst-case relative regret is therefore asymptotically flat in $m$, so the criterion fails to identify a meaningful interior optimum.

\begin{figure}[htbp]
\centering
\begin{minipage}{0.48\textwidth}
\centering
\caption{Relative regret as a function of $m$ for least favorable states approaching $(0,\epsilon)$.}
\safeincludegraphics[width=\linewidth]{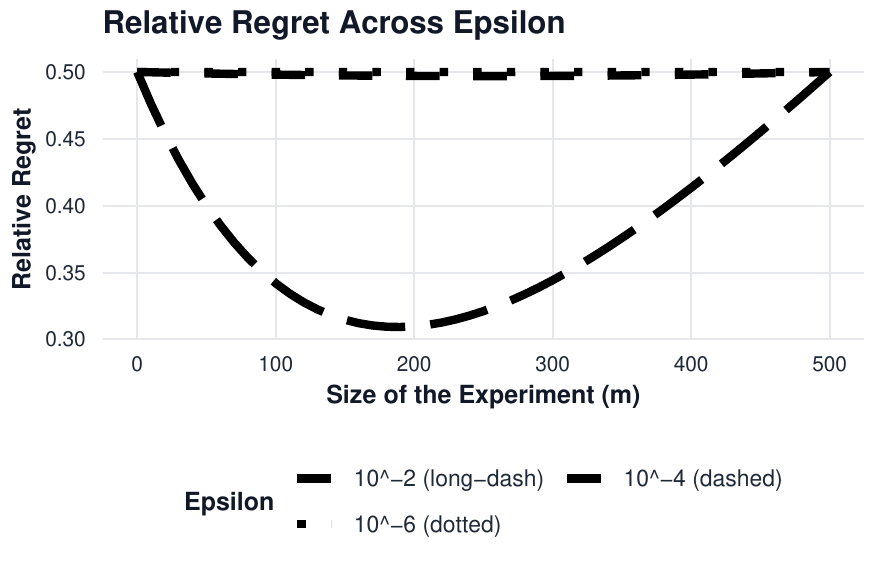}
\label{fig:relative_regret_m}
\end{minipage}\hfill
\begin{minipage}{0.48\textwidth}
\centering
\caption{Worst-case states under relative regret.}
\safeincludegraphics[width=\linewidth]{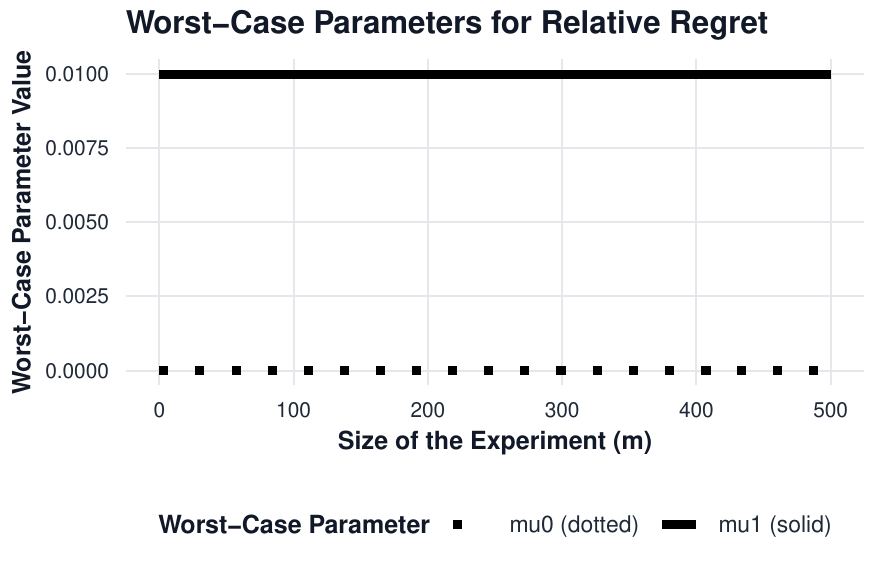}
\label{fig:relative_regret_worstcase}
\end{minipage}
\end{figure}

\begin{remark}[Nonlinear welfare transformations]
Normalizing regret by the oracle welfare amounts to evaluating policies through a nonlinear transformation of welfare. Although nonlinear welfare criteria appear in the statistical treatment-choice literature \citep{manski2007admissible, kitagawa2026treatment}, there is no consensus on which transformation is appropriate in the present design problem. For this reason, we do not pursue relative regret further.
\end{remark}

\section{Proofs for Section~\ref{sec:theoretical-results}}

\subsection{Proof of Proposition~\ref{prop:bernoulli_localization_body}}\label{sec:prop:bernoulli_localization_body}
\begin{proof}
We first establish two preliminary claims.
\begin{enumerate}
    \item For every $\mu\in[0,1]$,
    $$
    \eta(m,\mu,\mu)=1.
    $$

    \item For every $\varepsilon>0$,
    $$
    \sup_{\{(\mu_1,\mu_0)\in[0,1]^2:\ |\mu_1-\mu_0|\ge \varepsilon\}}
    \eta(m,\mu_1,\mu_0)
    \le
    N\exp\left(-\frac{m\varepsilon^2}{4}\right).
    $$
\end{enumerate}

To prove (1), note that when $\mu_1=\mu_0=\mu$, the treatment and control samples are i.i.d.\ from the same distribution, so
$$
(\hat\mu_1,\hat\mu_0)\stackrel{d}{=}(\hat\mu_0,\hat\mu_1).
$$
Hence symmetry implies
$$
e(m,\mu,\mu)=\frac12,
\qquad
e(m+2,\mu,\mu)=\frac12.
$$
Substituting these into \eqref{eq:bernoulli_discrete_eta} gives
$$
\eta(m,\mu,\mu)
=
(N-m)\frac12-(N-m-2)\frac12
=
1.
$$

To prove (2), let
$$
\tau:=|\mu_1-\mu_0|\ge \varepsilon.
$$
Write
$$
W_j:=X_j-Z_j,
\qquad
j=1,\dots,n,
\qquad
n=\frac{m}{2}.
$$
Then
$$
\hat\mu_1-\hat\mu_0=\frac1n\sum_{j=1}^n W_j,
$$
and each $W_j$ takes values in $\{-1,0,1\}\subset[-1,1]$.

First suppose that $\mu_1>\mu_0$. Then
$$
E_{\mu_1,\mu_0}[W_j]=\mu_1-\mu_0=\tau.
$$
Moreover,
$$
e(m,\mu_1,\mu_0)
=
P_{\mu_1,\mu_0}\left(\sum_{j=1}^n W_j<0\right)
+\frac12 P_{\mu_1,\mu_0}\left(\sum_{j=1}^n W_j=0\right)
\le
P_{\mu_1,\mu_0}\left(\sum_{j=1}^n W_j\le 0\right).
$$
Therefore,
$$
\begin{aligned}
e(m,\mu_1,\mu_0)
&\le
P_{\mu_1,\mu_0}\left(\frac1n\sum_{j=1}^n W_j\le 0\right)\\
&=
P_{\mu_1,\mu_0}\left(
\frac1n\sum_{j=1}^n W_j
-
E_{\mu_1,\mu_0}\left[\frac1n\sum_{j=1}^n W_j\right]
\le
-\tau
\right).
\end{aligned}
$$
Applying Hoeffding's inequality for independent variables in $[-1,1]$ yields
$$
e(m,\mu_1,\mu_0)
\le
\exp\left(-\frac{n\tau^2}{2}\right).
$$

Next suppose that $\mu_1<\mu_0$. Then
$$
E_{\mu_1,\mu_0}[W_j]=\mu_1-\mu_0=-\tau,
$$
and
$$
e(m,\mu_1,\mu_0)
=
P_{\mu_1,\mu_0}\left(\sum_{j=1}^n W_j>0\right)
+\frac12 P_{\mu_1,\mu_0}\left(\sum_{j=1}^n W_j=0\right)
\le
P_{\mu_1,\mu_0}\left(\sum_{j=1}^n W_j\ge 0\right).
$$
Hence
$$
\begin{aligned}
e(m,\mu_1,\mu_0)
&\le
P_{\mu_1,\mu_0}\left(\frac1n\sum_{j=1}^n W_j\ge 0\right)\\
&=
P_{\mu_1,\mu_0}\left(
\frac1n\sum_{j=1}^n W_j
-
E_{\mu_1,\mu_0}\left[\frac1n\sum_{j=1}^n W_j\right]
\ge
\tau
\right),
\end{aligned}
$$
so Hoeffding's inequality again gives
$$
e(m,\mu_1,\mu_0)
\le
\exp\left(-\frac{n\tau^2}{2}\right).
$$

Combining the two cases yields
$$
e(m,\mu_1,\mu_0)
\le
\exp\left(-\frac{n|\mu_1-\mu_0|^2}{2}\right)
=
\exp\left(-\frac{m|\mu_1-\mu_0|^2}{4}\right).
$$
If $|\mu_1-\mu_0|\ge \varepsilon$, then
$$
e(m,\mu_1,\mu_0)
\le
\exp\left(-\frac{m\varepsilon^2}{4}\right).
$$
Since $e(m+2,\mu_1,\mu_0)\ge 0$, \eqref{eq:bernoulli_discrete_eta} implies
$$
\eta(m,\mu_1,\mu_0)
\le
(N-m)e(m,\mu_1,\mu_0)
\le
N\exp\left(-\frac{m\varepsilon^2}{4}\right),
$$
which proves (2).

Now let $(\mu_{1,N}^\star,\mu_{0,N}^\star)$ be a maximizer. If
$$
|\mu_{1,N}^\star-\mu_{0,N}^\star|=0,
$$
the conclusion is immediate. Otherwise, set
$$
\varepsilon:=|\mu_{1,N}^\star-\mu_{0,N}^\star|>0.
$$
By part (1), $\eta(m_N,\mu,\mu)=1$ for every $\mu\in[0,1]$, so maximality implies
$$
\eta(m_N,\mu_{1,N}^\star,\mu_{0,N}^\star)\ge 1.
$$
Applying part (2) with this choice of $\varepsilon$ yields
$$
1
\le
\eta(m_N,\mu_{1,N}^\star,\mu_{0,N}^\star)
\le
N\exp\left(-\frac{m_N\varepsilon^2}{4}\right).
$$
Taking logarithms gives
$$
\varepsilon^2\le \frac{4\log N}{m_N},
$$
that is,
$$
|\mu_{1,N}^\star-\mu_{0,N}^\star|
\le
2\sqrt{\frac{\log N}{m_N}}.
$$
Since
$$
\frac{m_N}{N}\to\alpha>0,
$$
there exists $c>0$ such that
$$
m_N\ge cN
$$
for all sufficiently large $N$. Therefore
$$
|\mu_{1,N}^\star-\mu_{0,N}^\star|
\le
2\sqrt{\frac{\log N}{m_N}}
\le
\frac{2}{\sqrt{c}}\sqrt{\frac{\log N}{N}}
=
O\left(\sqrt{\frac{\log N}{N}}\right).
$$
Because
$$
\frac{\log N}{N}\to0,
$$
we have
$$
|\mu_{1,N}^\star-\mu_{0,N}^\star|\to0.
$$
\end{proof}

\subsection{Proof of Theorem~\ref{thm:full_region_uniform_gaussian_approx}}\label{sec:thm:full_region_uniform_gaussian_approx}

\subsubsection{Proof of Lemma~\ref{lem:full_region_exact_identity}}

\begin{lemma}[Exact identity for the Bernoulli marginal ratio]
\label{lem:full_region_exact_identity}
Let
$$
e_n(\mu,\delta):=e(2n,\mu+\delta,\mu-\delta).
$$
Then
$$
e_n(\mu,\delta)-e_{n+1}(\mu,\delta)=\delta\,P_{\mu,\delta}(S_n=0),
$$
and consequently
$$
\eta(2n,\mu+\delta,\mu-\delta)
=
2e_n(\mu,\delta)+\delta(N-2n-2)P_{\mu,\delta}(S_n=0).
$$
\end{lemma}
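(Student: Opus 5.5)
The plan is to compute $e_n-e_{n+1}$ directly, by conditioning on $S_n$ and adding one more matched pair. Throughout, $n=m/2$ and $S_n$ is the treatment-minus-control success difference from Section~\ref{sec:theoretical-results}. Assume first that $\delta>0$. Then by \eqref{eq:e_m},
\[
e_n=P_{\mu,\delta}(S_n<0)+\tfrac12 P_{\mu,\delta}(S_n=0).
\]
Adding one more matched pair gives $S_{n+1}=S_n+W$, where $W=X-Z$ with $X\sim\mathrm{Bernoulli}(\mu+\delta)$ and $Z\sim\mathrm{Bernoulli}(\mu-\delta)$ independent of each other and of $S_n$. Set
\[
a:=P(W=1)=(\mu+\delta)(1-\mu+\delta),\qquad b:=P(W=-1)=(\mu-\delta)(1-\mu-\delta).
\]
Then $a-b=(\mu+\delta)-(\mu-\delta)=2\delta$.

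\emph{Step 1 (one-step recursion).} Since $W\in\{-1,0,1\}$, $S_{n+1}$ can cross zero or hit zero only from the states $S_n\in\{-1,0,1\}$. This gives
\[
P(S_{n+1}<0)=P(S_n<0)-aP(S_n=-1)+bP(S_n=0),
\]
\[
P(S_{n+1}=0)=(1-a-b)P(S_n=0)+aP(S_n=-1)+bP(S_n=1).
\]
Combining these, the change is
\[
e_{n+1}-e_n=\tfrac12\bigl[bP(S_n=1)-aP(S_n=-1)\bigr]-\tfrac12(a-b)P(S_n=0).
\]

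\emph{Step 2 (tilting symmetry).} This is the key step. I need to show that $bP(S_n=1)=aP(S_n=-1)$. When $a,b>0$, write $P(W=w)=(a/b)^{w/2}h(w)$, with $h(\pm1)=\sqrt{ab}$ and $h(0)=1-a-b$, so $h$ is symmetric in $w$. Convolving $n$ copies gives $P(S_n=s)=(a/b)^{s/2}g_n(s)$, where $g_n$ is the $n$-fold convolution of $h$ and is therefore even. Hence $P(S_n=1)/P(S_n=-1)=a/b$, which is the required identity.

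The boundary cases are handled directly. If $b=0$, then $S_n\ge0$ almost surely, so both sides vanish. The case $a=0$ is impossible when $\delta>0$, since then $a>b\ge0$.

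With Step 2, the bracket in Step 1 drops out and $e_n-e_{n+1}=\tfrac12(a-b)P(S_n=0)=\delta P_{\mu,\delta}(S_n=0)$. For $\delta<0$ the same argument goes through with the roles of the arms swapped, giving $|\delta|$ in place of $\delta$. For $\delta=0$ both sides are zero, since $e\equiv\tfrac12$ by symmetry. The statement is applied on $\mathcal A_N(\kappa,K)$, where $\delta>0$.

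\emph{Step 3 (formula for $\eta$).} By \eqref{eq:bernoulli_discrete_eta} with $m=2n$,
\[
\eta=(N-2n)e_n-(N-2n-2)e_{n+1}=2e_n+(N-2n-2)(e_n-e_{n+1}).
\]
Substituting Step 2 gives the stated expression.

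The main obstacle is Step 2. The exponential-tilt representation of the law of $S_n$ is what makes the asymmetric terms cancel; the remaining steps are bookkeeping.
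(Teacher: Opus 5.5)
Your proof is correct and is essentially the paper's argument: the paper also conditions on $S_n$, computing $g(d)-E[g(d+W_{n+1})]$, which is nonzero only for $d\in\{-1,0,1\}$. It then uses the same exponential-tilt factorization of the law of $S_n$ (your $b$ is the paper's $c$) to get the balance identity that cancels the $S_n=\pm1$ terms, leaving $\tfrac12(a-c)P_{\mu,\delta}(S_n=0)=\delta P_{\mu,\delta}(S_n=0)$. Your explicit handling of the sign of $\delta$ is a correct refinement the paper leaves implicit: for $\delta<0$ the identity holds with $|\delta|$, and the paper's proof uses only the $\mu_1>\mu_0$ branch of $e$, being applied only on $\mathcal A_N(\kappa,K)$, where $\delta>0$.
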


\begin{proof}
Let $q:=\mu(1-\mu)$, and define
$$
a:=(\mu+\delta)(1-\mu+\delta)=q+\delta+\delta^2,
$$
$$
b:=(\mu+\delta)(\mu-\delta)+(1-\mu-\delta)(1-\mu+\delta)=1-2q-2\delta^2,
$$
$$
c:=(\mu-\delta)(1-\mu-\delta)=q-\delta+\delta^2.
$$

By the assumption $\mu\pm\delta\in[0,1]$,
$$
a,b,c\ge 0,\qquad a+b+c=1.
$$

Under $P_{\mu,\delta}$, the random variables $W_1,W_2,\dots$ are independent and identically distributed with
$$
P_{\mu,\delta}(W_j=1)=a,\qquad
P_{\mu,\delta}(W_j=0)=b,\qquad
P_{\mu,\delta}(W_j=-1)=c,
$$
and
$$
S_n:=\sum_{j=1}^n W_j.
$$

\medskip
\noindent
We first record the tilted representation that will be used below.
When $a,c>0$, let
$$
u:=\sqrt{ac},\qquad
\lambda:=b+2u,\qquad
r:=\sqrt{\frac{a}{c}},
\qquad
\widetilde q:=\frac{u}{\lambda},
$$
and define the symmetric lazy walk $\widetilde S_n=\sum_{j=1}^n\widetilde W_j$ by
$$
P(\widetilde W_j=1)=P(\widetilde W_j=-1)=\widetilde q,
\qquad
P(\widetilde W_j=0)=1-2\widetilde q=\frac{b}{\lambda}.
$$
Furthermore, put
$$
\widetilde p_n(k):=P(\widetilde S_n=k)
\qquad
(k\in\{-n,\dots,n\}).
$$

For each path $\omega=(\omega_1,\dots,\omega_n)\in\{-1,0,1\}^n$, define
$$
N_+(\omega):=\#\{j:\omega_j=1\},\quad
N_0(\omega):=\#\{j:\omega_j=0\},\quad
N_-(\omega):=\#\{j:\omega_j=-1\}.
$$
Then
$$
N_+(\omega)+N_0(\omega)+N_-(\omega)=n,
\qquad
N_+(\omega)-N_-(\omega)=\sum_{j=1}^n \omega_j.
$$
Hence, if $\sum_{j=1}^n\omega_j=k$, then
$$
\begin{aligned}
P_{\mu,\delta}\bigl((W_1,\dots,W_n)=\omega\bigr)
&=
a^{N_+(\omega)}b^{N_0(\omega)}c^{N_-(\omega)} \\
&=
\bigl(\sqrt{a/c}\bigr)^k
(\sqrt{ac})^{N_+(\omega)+N_-(\omega)}
b^{N_0(\omega)} \\
&=
\lambda^n r^k
\left(\frac{u}{\lambda}\right)^{N_+(\omega)+N_-(\omega)}
\left(\frac{b}{\lambda}\right)^{N_0(\omega)} \\
&=
\lambda^n r^k
P\bigl((\widetilde W_1,\dots,\widetilde W_n)=\omega\bigr).
\end{aligned}
$$
Summing this identity over all $\omega$ such that $\sum_{j=1}^n\omega_j=k$, we obtain
$$
\begin{aligned}
P_{\mu,\delta}(S_n=k)
&=
\sum_{\omega : \sum_{j=1}^n \omega_j = k}
P_{\mu,\delta}\bigl((W_1,\dots,W_n)=\omega\bigr) \\
&=
\sum_{\omega : \sum_{j=1}^n \omega_j = k}
\lambda^n r^k
P\bigl((\widetilde W_1,\dots,\widetilde W_n)=\omega\bigr) \\
&=
\lambda^n r^k
\sum_{\omega : \sum_{j=1}^n \omega_j = k}
P\bigl((\widetilde W_1,\dots,\widetilde W_n)=\omega\bigr) \\
&=
\lambda^n r^k \widetilde p_n(k).
\end{aligned}
$$
In particular,
$$
P_{\mu,\delta}(S_n=1)=r^2P_{\mu,\delta}(S_n=-1),
$$
and since
$$
cr^2=a,
$$
we obtain
\begin{equation}
c\,P_{\mu,\delta}(S_n=1)=a\,P_{\mu,\delta}(S_n=-1).
\label{eq:exact_identity_balance}
\end{equation}

On the other hand, \eqref{eq:exact_identity_balance} also holds when $ac=0$.
Indeed, if $a=0$, then $W_j\in\{-1,0\}$ and therefore $P_{\mu,\delta}(S_n=1)=0$.
If $c=0$, then $W_j\in\{0,1\}$ and therefore $P_{\mu,\delta}(S_n=-1)=0$.
Thus \eqref{eq:exact_identity_balance} holds in all cases.

\medskip
\noindent
Next, define
$$
g(d):=\mathbf 1_{\{d<0\}}+\frac12\mathbf 1_{\{d=0\}}
\qquad
(d\in\mathbb Z).
$$
By definition,
$$
e_n(\mu,\delta)=E_{\mu,\delta}[g(S_n)],
\qquad
e_{n+1}(\mu,\delta)=E_{\mu,\delta}[g(S_n+W_{n+1})].
$$
Since $W_{n+1}$ is independent of $S_n$,
$$
e_n-e_{n+1}
=
E_{\mu,\delta}\left[
g(S_n)-E_{\mu,\delta}\left[g(S_n+W_{n+1})\,\middle|\,S_n\right]
\right].
$$
Define
$$
\Delta(d):=
g(d)-E_{\mu,\delta}[g(d+W_{n+1})]
\qquad
(d\in\mathbb Z).
$$
Then
$$
e_n-e_{n+1}=E_{\mu,\delta}[\Delta(S_n)].
$$

We compute $\Delta(d)$ by considering cases.

If $d\le -2$, then
$d+W_{n+1}\in\{d-1,d,d+1\}\subset(-\infty,0)$, so
$$
\Delta(d)=1-1=0.
$$
Similarly, if $d\ge 2$, then
$d+W_{n+1}\in\{d-1,d,d+1\}\subset(0,\infty)$, so
$$
\Delta(d)=0-0=0.
$$

When $d=-1$,
$$
g(-1)=1,
\qquad
g(-1+W_{n+1})=
\begin{cases}
1/2,& W_{n+1}=1,\\
1,& W_{n+1}=0,-1,
\end{cases}
$$
and therefore
$$
\Delta(-1)
=
1-\left(\frac{a}{2}+b+c\right)
=
\frac{a}{2}.
$$

When $d=0$,
$$
g(0)=\frac12,
\qquad
g(W_{n+1})=
\begin{cases}
0,& W_{n+1}=1,\\
1/2,& W_{n+1}=0,\\
1,& W_{n+1}=-1,
\end{cases}
$$
and hence
$$
\Delta(0)
=
\frac12-\left(c+\frac{b}{2}\right)
=
\frac{a-c}{2}
=
\delta.
$$

When $d=1$,
$$
g(1)=0,
\qquad
g(1+W_{n+1})=
\begin{cases}
0,& W_{n+1}=1,0,\\
1/2,& W_{n+1}=-1,
\end{cases}
$$
so
$$
\Delta(1)=-\frac{c}{2}.
$$

Consequently,
$$
\Delta(d)=
\begin{cases}
a/2,& d=-1,\\[1mm]
\delta,& d=0,\\[1mm]
-c/2,& d=1,\\[1mm]
0,& \text{otherwise}.
\end{cases}
$$
Therefore
$$
\begin{aligned}
e_n-e_{n+1}
&=
E_{\mu,\delta}[\Delta(S_n)] \\
&=
\frac{a}{2}P_{\mu,\delta}(S_n=-1)
+\delta\,P_{\mu,\delta}(S_n=0)
-\frac{c}{2}P_{\mu,\delta}(S_n=1).
\end{aligned}
$$
By \eqref{eq:exact_identity_balance}, the first and third terms cancel exactly. Hence
$$
e_n(\mu,\delta)-e_{n+1}(\mu,\delta)
=
\delta\,P_{\mu,\delta}(S_n=0).
$$

Finally, by the definition of $\eta$,
$$
\eta(2n,\mu+\delta,\mu-\delta)
=
(N-2n)e_n-(N-2n-2)e_{n+1}.
$$
Rearranging gives
$$
\begin{aligned}
\eta(2n,\mu+\delta,\mu-\delta)
&=
2e_n+(N-2n-2)(e_n-e_{n+1}) \\
&=
2e_n+\delta(N-2n-2)P_{\mu,\delta}(S_n=0).
\end{aligned}
$$
This proves all the claims.
\end{proof}

\subsubsection{Proof of Lemma~\ref{lem:uniform_local_clt_lazy_compact}}

\begin{lemma}[Uniform local central limit theorem on a compact parameter set]
\label{lem:uniform_local_clt_lazy_compact}
Fix $0<\theta_0<\theta_1<1/2$.
For each $\theta\in[\theta_0,\theta_1]$, let
$$
P(\widetilde W_j^{(\theta)}=1)=P(\widetilde W_j^{(\theta)}=-1)=\theta,
\qquad
P(\widetilde W_j^{(\theta)}=0)=1-2\theta,
$$
and define
$$
\widetilde S_n^{(\theta)}:=\sum_{j=1}^n \widetilde W_j^{(\theta)},
\qquad
\widetilde p_n^{(\theta)}(k):=P(\widetilde S_n^{(\theta)}=k).
$$
Then
$$
\sup_{\theta\in[\theta_0,\theta_1]}
\left|
\sqrt n\,\widetilde p_n^{(\theta)}(0)-\frac{1}{\sqrt{4\pi\theta}}
\right|
\to 0
\qquad (n\to\infty).
$$
In particular,
$$
\widetilde p_n^{(\theta)}(0)
=
\frac{1}{\sqrt{4\pi\theta\,n}}\bigl(1+o(1)\bigr)
$$
uniformly in $\theta\in[\theta_0,\theta_1]$.
\end{lemma}

\begin{proof}
For each $\theta\in[\theta_0,\theta_1]$, define the characteristic function of
$\widetilde W_1^{(\theta)}$ by
$$
\psi_\theta(u)
:=
E\bigl[e^{iu\widetilde W_1^{(\theta)}}\bigr]
=
\theta e^{iu}+\theta e^{-iu}+1-2\theta
=
1-2\theta(1-\cos u)
\qquad (u\in\mathbb R).
$$
Since $\widetilde S_n^{(\theta)}$ is integer-valued, Fourier inversion gives
$$
\widetilde p_n^{(\theta)}(0)
=
\frac{1}{2\pi}\int_{-\pi}^{\pi}\psi_\theta(u)^n\,du.
$$
Changing variables by $u=t/\sqrt n$, we obtain
$$
\sqrt n\,\widetilde p_n^{(\theta)}(0)
=
\frac{1}{2\pi}\int_{-\pi\sqrt n}^{\pi\sqrt n}
\psi_\theta(t/\sqrt n)^n\,dt.
$$
On the other hand,
$$
\frac{1}{2\pi}\int_{\mathbb R}e^{-\theta t^2}\,dt
=
\frac{1}{\sqrt{4\pi\theta}}.
$$
Thus it suffices to show that
$$
\sup_{\theta\in[\theta_0,\theta_1]}
\left|
\frac{1}{2\pi}\int_{-\pi\sqrt n}^{\pi\sqrt n}
\psi_\theta(t/\sqrt n)^n\,dt
-
\frac{1}{2\pi}\int_{\mathbb R}e^{-\theta t^2}\,dt
\right|
\to 0.
$$

By continuity, choose $\varepsilon \in (0,1]$ sufficiently small so that, for all
$|x|\le\varepsilon$,
$$
1-\cos x \ge \frac{x^2}{4},
\qquad
2\theta_1(1-\cos x) \le \frac12.
$$
Fix $A>0$. For all sufficiently large $n$, we have $A\le\varepsilon\sqrt n$.
For such $n$, we decompose the difference as follows:
$$
\begin{aligned}
&\sup_{\theta\in[\theta_0,\theta_1]}
\left|
\sqrt n\,\widetilde p_n^{(\theta)}(0)-\frac{1}{\sqrt{4\pi\theta}}
\right| \\
&\le
\underbrace{
\sup_{\theta\in[\theta_0,\theta_1]}
\frac{1}{2\pi}\int_{|t|\le A}
\left|
\psi_\theta(t/\sqrt n)^n-e^{-\theta t^2}
\right|dt
}_{\textup{Term I}} \\
&\qquad+
\underbrace{
\sup_{\theta\in[\theta_0,\theta_1]}
\frac{1}{2\pi}\int_{A<|t|\le \varepsilon\sqrt n}
\left|\psi_\theta(t/\sqrt n)^n\right|dt
}_{\textup{Term II}} \\
&\qquad+
\underbrace{
\sup_{\theta\in[\theta_0,\theta_1]}
\frac{1}{2\pi}\int_{\varepsilon\sqrt n<|t|\le \pi\sqrt n}
\left|\psi_\theta(t/\sqrt n)^n\right|dt
}_{\textup{Term III}} \\
&\qquad+
\underbrace{
\sup_{\theta\in[\theta_0,\theta_1]}
\frac{1}{2\pi}\int_{|t|>A}e^{-\theta t^2}\,dt
}_{\textup{Term IV}}.
\end{aligned}
$$

We first bound Term I. For $|x| \le \varepsilon$, the choice of $\varepsilon$ implies $\psi_\theta(x) = 1 - 2 \theta(1- \cos x) \in[1/2,1]$, so $\log\psi_\theta(x)$ is well-defined. Using
$1 - \cos x =\frac{x^2}{2}+O(x^4)$ and $\log(1 + y) = y + O(y^2)$, 
we have, uniformly in $\theta\in[\theta_0,\theta_1]$,
\begin{align*}
\log\psi_\theta(x) = 1 - 2\theta\left(\frac{x^2}{2} + O(x^4)\right) = 1 - \theta x^2 + O(x^4) = -\theta x^2+O(x^4) \qquad (|x|\le\varepsilon).
\end{align*}
The constant in the $O(x^4)$ term can be chosen independently of
$\theta\in[\theta_0,\theta_1]$. 
Hence, for fixed $A>0$ and sufficiently large
$n$, if $|t|\le A$, then $|t|/\sqrt n\le\varepsilon$, and
$$
\log\psi_\theta(t/\sqrt{n}) = -\theta \left(\frac{t}{\sqrt{n}}\right)^2 + O\left(\left(\frac{t}{\sqrt{n}}\right)^4\right) = -\theta \frac{t^2}{n} + O\left(\frac{t^4}{n^2}\right)
$$

Thus, we have
$$
\sup_{\theta\in[\theta_0,\theta_1]}
\left|
n\log\psi_\theta(t/\sqrt n)+\theta t^2
\right|
\le
C\frac{t^4}{n}.
$$
Since both $n\log\psi_\theta(t/\sqrt n)$ and $-\theta t^2$ are nonpositive, and
since $e^x$ is $1$-Lipschitz on $(-\infty,0]$, it follows that
$$
\sup_{\theta\in[\theta_0,\theta_1]}
\left|
\psi_\theta(t/\sqrt n)^n-e^{-\theta t^2}
\right|
\le
C\frac{t^4}{n}
\qquad (|t|\le A).
$$
Therefore,
$$
\textup{Term I}
\le
\frac{C}{2\pi n}\int_{|t|\le A}t^4\,dt
\to 0
\qquad (n\to\infty)
$$
for each fixed $A>0$.

Next, we bound Term II. For $|u|\le\varepsilon$, we have, by $1 + x \leq e^x$, 
$$
0\le \psi_\theta(u)
=
1-2\theta(1-\cos u)
\le
1-\frac{\theta_0}{2}u^2
\le
\exp\left(-\frac{\theta_0}{2}u^2\right).
$$
Therefore, since $|t|\le\varepsilon\sqrt n$, we have
$$
\left|\psi_\theta(t/\sqrt n)^n\right|
\le
\exp\left(-\frac{\theta_0}{2}t^2\right),
$$
and hence, by enlarging the domain of integration,
$$
\textup{Term II}
\le
\frac{1}{2\pi}
\int_{|t|>A}
\exp\left(-\frac{\theta_0}{2}t^2\right)dt.
$$

We now bound Term III. Put
$$
c_\varepsilon:=1-\cos\varepsilon>0.
$$
If $\varepsilon\sqrt n<|t|\le\pi\sqrt n$, then $u = t/\sqrt n$ satisfies $ \varepsilon < |u| \le \pi$. Hence $ c_\varepsilon \leq 1-\cos u \le 2$, we have
$$
1-4\theta_1
\le
\psi_\theta(u)
=
1-2\theta(1-\cos u)
\le
1-2\theta_0 c_\varepsilon.
$$
Thus, uniformly in $\theta\in[\theta_0,\theta_1]$ and $\varepsilon<|u|\le\pi$,
$$
|\psi_\theta(u)|
\le
\rho_\varepsilon
:=
\max\bigl\{|1-4\theta_1|,\ 1-2\theta_0c_\varepsilon\bigr\}
<1,
$$
where the last inequality holds because of $0<\theta_0<\theta_1<1/2$. Consequently,
$$
\textup{Term III}
\le
\frac{1}{2\pi}
\int_{\varepsilon\sqrt n<|t|\le\pi\sqrt n}
\rho_\varepsilon^n\,dt
\le
\sqrt n\,\rho_\varepsilon^n
\to 0
\qquad (n\to\infty).
$$

Finally, for Term IV, since $\theta\ge\theta_0$,
$$
\textup{Term IV}
\le
\frac{1}{2\pi}\int_{|t|>A}e^{-\theta_0t^2}\,dt.
$$

Combining the four estimates and taking the limsup as $n\to\infty$ with $A$
fixed, we obtain
$$
\begin{aligned}
&\limsup_{n\to\infty}
\sup_{\theta\in[\theta_0,\theta_1]}
\left|
\sqrt n\,\widetilde p_n^{(\theta)}(0)-\frac{1}{\sqrt{4\pi\theta}}
\right| \\
&\le
\frac{1}{2\pi}
\int_{|t|>A}
\exp\left(-\frac{\theta_0}{2}t^2\right)dt
+
\frac{1}{2\pi}
\int_{|t|>A}e^{-\theta_0t^2}\,dt.
\end{aligned}
$$
Letting $A\to\infty$ gives
$$
\sup_{\theta\in[\theta_0,\theta_1]}
\left|
\sqrt n\,\widetilde p_n^{(\theta)}(0)-\frac{1}{\sqrt{4\pi\theta}}
\right|
\to 0.
$$
Since $(4\pi\theta)^{-1/2}$ is bounded away from $0$ and $\infty$ on
$[\theta_0,\theta_1]$, this also implies
$$
\widetilde p_n^{(\theta)}(0)
=
\frac{1}{\sqrt{4\pi\theta\,n}}\bigl(1+o(1)\bigr)
$$
uniformly in $\theta\in[\theta_0,\theta_1]$.
This completes the proof.
\end{proof}

\subsubsection{Proof of Theorem~\ref{thm:full_region_uniform_gaussian_approx}}

\begin{lemma}[Berry--Esseen inequality]
\label{lem:berry_esseen_standard}
There exists an absolute constant $C_{\mathrm{BE}}>0$ such that the following holds.
If $Y_1,Y_2,\dots$ are i.i.d. with
$$
E[Y_1]=0,
\qquad
\Var(Y_1)=1,
\qquad
E[|Y_1|^3]<\infty,
$$
then for every $n\ge1$,
$$
\sup_{x\in\mathbb R}
\left|
P\left(
\frac{1}{\sqrt n}\sum_{j=1}^n Y_j\le x
\right)-\Phi(x)
\right|
\le
\frac{C_{\mathrm{BE}}\,E[|Y_1|^3]}{\sqrt n}.
$$
\end{lemma}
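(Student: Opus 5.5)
The inequality is classical; in the paper it can simply be cited (Berry 1941; Esseen 1942). To keep the argument self-contained, I would follow Esseen's Fourier-analytic route. Let $F_n$ denote the distribution function of $n^{-1/2}\sum_{j=1}^n Y_j$, let $\psi(u):=E[e^{iuY_1}]$ and $\rho:=E[|Y_1|^3]$. By Lyapunov's inequality, $\rho\ge (E[Y_1^2])^{3/2}=1$. If $\rho/\sqrt n$ exceeds a fixed small constant $c_0$, the bound is trivial: the left-hand side is at most $1$, so any $C_{\mathrm{BE}}\ge 1/c_0$ works. Hence it suffices to treat $\rho/\sqrt n\le c_0$.

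The first tool is Esseen's smoothing inequality. Since $\Phi'=\phi\le (2\pi)^{-1/2}$, for every $T>0$,
\[
\sup_x|F_n(x)-\Phi(x)|
\le
\frac1\pi\int_{-T}^{T}\frac{\bigl|\psi(t/\sqrt n)^n-e^{-t^2/2}\bigr|}{|t|}\,dt
+\frac{24}{\pi\sqrt{2\pi}\,T}.
\]
I would prove this in the standard way: convolve both distributions with a smoothing kernel whose Fourier transform is supported in $[-T,T]$, and use Fourier inversion to control the smoothed difference. I would then choose $T:=\frac{3\sqrt n}{4\rho}$, so that the boundary term is already of order $\rho/\sqrt n$.

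The main step is a pointwise bound on the characteristic-function difference for $|t|\le T$. Taylor expansion with the third absolute moment gives $|\psi(u)-1+u^2/2|\le \rho|u|^3/6$. Consequently $|\psi(u)|\le 1-u^2/2+\rho|u|^3/6\le \exp(-u^2/2+\rho|u|^3/6)$. For $|u|=|t|/\sqrt n$ and $|t|\le T$, we have $\rho|u|\le 3/4$, and this yields $|\psi(t/\sqrt n)|\le \exp(-t^2/(4n))$ after adjusting constants. I would then use the telescoping estimate
\[
|a^n-b^n|\le n|a-b|\max(|a|,|b|)^{n-1}
\]
with $a=\psi(t/\sqrt n)$ and $b=e^{-t^2/(2n)}$. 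Bounding $|a-b|\le C\rho|t|^3 n^{-3/2}$ (Taylor for $\psi$ plus the expansion of the exponential, using $\rho\ge1$), this gives
\[
\bigl|\psi(t/\sqrt n)^n-e^{-t^2/2}\bigr|\le C\,\frac{\rho|t|^3}{\sqrt n}\,e^{-t^2/8}\qquad(|t|\le T).
\]
Dividing by $|t|$ and integrating yields a quantity of order $C\rho/\sqrt n$. Combining this with the $1/T$ term completes the proof.

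I expect the main obstacle to be keeping the exponential-decay bound $|\psi(t/\sqrt n)|^{n-1}\le e^{-ct^2}$ valid uniformly up to $|t|=T\asymp\sqrt n/\rho$, and not merely for $|t|$ of order $n^{1/6}$. Near the upper end of this range, the cubic Taylor remainder is comparable to the quadratic term, so the choice of the constant $3/4$ in $T$ and the accompanying numerical constants must be tuned so that the remainder absorbs at most half of the $-u^2/2$ term. If this direct bound proves delicate, my fallback is the symmetrization trick: $|\psi(u)|^2$ is the characteristic function of $Y_1-Y_1'$, which has variance $2$ and third absolute moment at most $8\rho$, and it gives $|\psi(u)|\le \exp(-u^2/3)$ for $|u|\le 1/(4\rho)$. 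That range is slightly smaller, so I would shrink $T$ accordingly and accept a larger absolute constant $C_{\mathrm{BE}}$.
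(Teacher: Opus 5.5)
Your proof is correct, and it goes further than the paper, which gives no proof at all. The paper states the Berry--Esseen inequality as a classical result, marks the proof ``Omitted,'' and uses it as a black box. So the only comparison is between citation and your reconstruction of Esseen's standard argument: the smoothing inequality with $T\asymp\sqrt n/\rho$, the bound $|\psi(u)|\le 1-u^2/2+\rho|u|^3/6$, and the telescoping estimate $|a^n-b^n|\le n|a-b|\max(|a|,|b|)^{n-1}$.

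What your version buys over a bare citation is that every constant is visibly numerical. That is exactly what the paper needs in Step~2 of the proof of Theorem~\ref{thm:full_region_uniform_gaussian_approx}. There, uniformity over $\mathcal A_N(\kappa,K)$ rests on $C_{\mathrm{BE}}$ not depending on $\vartheta$, combined with the uniform bound $\beta_3(\vartheta)\le M_\kappa$.

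A few small remarks:
\begin{itemize}
\item The obstacle you anticipate does not arise. Since $\rho\ge1$ and $\rho|u|\le 3/4$, you get $|u|\le 3/4<\sqrt2$ (so $|1-u^2/2|=1-u^2/2$) and $\rho|u|^3/6\le u^2/8$. Hence $|\psi(u)|\le 1-3u^2/8\le e^{-3u^2/8}$ directly, with the remainder absorbing only a quarter of the quadratic term. The symmetrization fallback is unnecessary.
\item Converting $|\psi(t/\sqrt n)|^{n-1}$ and $e^{-(n-1)t^2/(2n)}$ into $e^{-t^2/8}$ uses $(n-1)/n\ge 1/2$, i.e.\ $n\ge2$. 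Your reduction to $\rho/\sqrt n\le c_0$ with $c_0<1$ guarantees this automatically, because $\rho\ge1$ forces $n\ge c_0^{-2}>1$. It is worth stating explicitly, since for $n=1$ the integrand would carry no Gaussian decay.
\item The smoothing inequality also requires $\bigl(\psi(t/\sqrt n)^n-e^{-t^2/2}\bigr)/t$ to be integrable near $t=0$. This holds here because both laws have mean zero and finite variance.
\end{itemize}
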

\begin{proof}
Omitted.
\end{proof}

\begin{proof}
In what follows, write $m:=m_N$ and $n:=n_N$. Also set
$$
q_\kappa:=\kappa(1-\kappa)>0.
$$
For $(\mu,\delta)\in\mathcal A_N(\kappa,K)$, we have, uniformly,
$$
q\in[q_\kappa,1/4],
\qquad
\delta^2=O\left(\frac{\log N}{N}\right)=o(1),
\qquad
q-\delta^2\ge \frac{q_\kappa}{2}.
$$

\textbf{Step 1: Evaluation of the local mass $P_{\mu,\delta}(S_n=0)$.}

We use the notation from the proof of Lemma~\ref{lem:full_region_exact_identity}. When $a,c>0$, define
$$
a:=(\mu+\delta)(1-\mu+\delta)=q+\delta+\delta^2,
$$
$$
b:=(\mu+\delta)(\mu-\delta)+(1-\mu-\delta)(1-\mu+\delta)=1-2q-2\delta^2,
$$
$$
c:=(\mu-\delta)(1-\mu-\delta)=q-\delta+\delta^2,
$$
and
$$
u:=\sqrt{ac},\qquad
\lambda:=b+2u,\qquad
r:=\sqrt{\frac{a}{c}},
\qquad
\widetilde q:=\frac{u}{\lambda}.
$$

First,
$$
ac=(q+\delta+\delta^2)(q-\delta+\delta^2)
=
q^2-(1-2q)\delta^2+\delta^4,
$$
and hence
$$
u=q+O_\kappa(\delta^2).
$$
Therefore,
$$
\lambda=b+2u
=
1-\frac{\delta^2}{q}+O_\kappa(\delta^4),
\qquad
1/\lambda = 1 + O_\kappa(\delta^2),
\qquad
\widetilde q:=\frac{u}{\lambda} = q + O_\kappa(\delta^2),
$$
uniformly. In particular, since $q=\mu(1-\mu)\leq 1/4$, for all sufficiently large $N$ we have
$$
\widetilde q\in\left[\frac{q_\kappa}{2}, \frac13\right].
$$
The random variable $\widetilde S_n$ is a sum of centered, bounded, span-$1$ lattice random variables, and the one-step variance is
$$
\Var(\widetilde W_j)=2\widetilde q.
$$
We apply Lemma~\ref{lem:uniform_local_clt_lazy_compact} with
$$
\theta_0:=\frac{q_\kappa}{2},
\qquad
\theta_1:=\frac13,
\qquad
\theta:=\widetilde q.
$$

As shown above, for all sufficiently large $N$ we have $\widetilde q\in[q_\kappa/2,1/3]$ uniformly. Thus,
$$
\widetilde p_n(0)
= \frac{1}{\sqrt{4\pi \widetilde q\,n}}\bigl(1+o(1)\bigr)
= \frac{1}{\sqrt{2\pi m\,\widetilde q}}\bigl(1+o(1)\bigr),
$$
uniformly over $(\mu,\delta)\in\mathcal A_N(\kappa,K)$. 

Next, $\log\lambda = - \delta^2/q + O_\kappa(\delta^4),$ and therefore, since $n = m/2$, $n\log\lambda = - m\delta^2/2q + O_\kappa(m\delta^4)$. Here, $m \delta^4 = O\left(\frac{(\log N)^2}{N}\right) = o(1)$. Moreover,
$$
\begin{aligned}
\frac{m\delta^2}{2q} &= \frac{m\delta^2}{2} \left( \frac{1}{q-\delta^2} - \frac{\delta^2}{q^2} - O(\delta^4) \right) = \frac{m\delta^2}{2(q-\delta^2)} - \frac{m\delta^4}{2q^2} - O(m\delta^6) = \frac{m\delta^2}{2(q-\delta^2)} + O_\kappa(m\delta^4)
\end{aligned}
$$

It follows that, $\lambda^n = \exp(n\log\lambda)$ and $t_N(\mu,\delta) = \delta\sqrt{m/q-\delta^2}$
\begin{equation}
\lambda^n = \exp \left( - \frac{t_N(\mu,\delta)^2}{2}+o(1) \right) = \exp\left(-\frac{t_N(\mu,\delta)^2}{2}+o(1)\right). \tag{5}
\label{eq:full_region_lambda_asymptotic}
\end{equation}

Furthermore, $\widetilde{q}/q-\delta^2 = 1 + o(1)$ uniformly. Substituting these estimates yields
$$
\begin{aligned}
P_{\mu,\delta}(S_n=0)
&= \lambda^n \widetilde p_n(0) = \frac{\lambda^n}{\sqrt{2\pi m\,\widetilde q}}\bigl(1+o(1)\bigr) \\
&= \frac{1}{\sqrt{2\pi m(q-\delta^2)}}
\exp\left(-\frac{t_N(\mu,\delta)^2}{2}\right)\bigl(1+o(1)\bigr) = \frac{\phi\bigl(t_N(\mu,\delta)\bigr)}
{\sqrt{m(q-\delta^2)}}\bigl(1+o(1)\bigr).
\end{aligned}
$$

\textbf{Step 2: Evaluation of the tail probability $P_{\mu,\delta}(S_n\le0)$.}

In what follows, put $\Theta_N = \mathcal A_N(\kappa,K)$. For each $\vartheta=(\mu,\delta)\in\Theta_N$, write
$$
q=q(\vartheta):=\mu(1-\mu),
\qquad
P_\vartheta:=P_{\mu,\delta},
\qquad
E_\vartheta:=E_{\mu,\delta},
$$
and define
$$
X_j^{(\vartheta)}
:=
\frac{W_j-2\delta}{\sqrt{2(q-\delta^2)}}
\qquad (j\ge1).
$$
Then, under $P_\vartheta$, the random variables
$X_1^{(\vartheta)},X_2^{(\vartheta)},\dots$ are i.i.d. and satisfy
$$
E_\vartheta[X_j^{(\vartheta)}]=0,
\qquad
\Var_\vartheta(X_j^{(\vartheta)})=1.
$$
Indeed,
$$
E_{\mu,\delta}[W_j]=2\delta,
\qquad
\Var_{\mu,\delta}(W_j)=2(q-\delta^2),
$$
so the claim follows immediately.

Next, we bound the third absolute moment uniformly over the parameter set.
Since $\mu-\delta\ge0$ and $\mu+\delta\le1$, we have
$$
0\le \delta\le \min\{\mu,1-\mu\}\le \frac12.
$$
Therefore,
$$
|W_j-2\delta|
\le
|W_j|+2\delta
\le
1+1
=
2
\qquad P_\vartheta\text{-a.s.}
$$
On the other hand, as shown at the beginning of Step 1, for all sufficiently large
$N$,
$$
q-\delta^2\ge \frac{q_\kappa}{2},
\qquad
q_\kappa:=\kappa(1-\kappa)>0,
$$
uniformly over $\Theta_N$. Hence
$$
|X_j^{(\vartheta)}|
=
\frac{|W_j-2\delta|}{\sqrt{2(q-\delta^2)}}
\le
\frac{2}{\sqrt{q_\kappa}}
\qquad P_\vartheta\text{-a.s.}
$$
for all $\vartheta\in\Theta_N$. Consequently,
$$
\beta_3(\vartheta)
:=
E_\vartheta\left[|X_1^{(\vartheta)}|^3\right]
\le
\left(\frac{2}{\sqrt{q_\kappa}}\right)^3
=:M_\kappa
\qquad
(\vartheta\in\Theta_N).
$$

The important point here is that we do not conclude the uniformity of the
distribution-function approximation directly from this uniform moment bound alone.
For each fixed $\vartheta\in\Theta_N$, applying
Lemma~\ref{lem:berry_esseen_standard} to
$Y_j=X_j^{(\vartheta)}$ gives
$$
\begin{aligned}
A_N(\vartheta)
:=
\sup_{x\in\mathbb R}
\left|
P_\vartheta\left(
\frac{1}{\sqrt n}\sum_{j=1}^n X_j^{(\vartheta)}\le x
\right)-\Phi(x)
\right| \le \frac{C_{\mathrm{BE}}\beta_3(\vartheta)}{\sqrt n} \le \frac{C_{\mathrm{BE}}M_\kappa}{\sqrt n}.
\end{aligned}
$$
This inequality holds for each $\vartheta$, and therefore taking the supremum
over $\vartheta\in\Theta_N$ yields
$$
\sup_{\vartheta\in\Theta_N}A_N(\vartheta)
\le
\frac{C_{\mathrm{BE}}M_\kappa}{\sqrt n}.
$$
Moreover, since $n=m/2$ and $m\ge \alpha_0N$,
$$
\frac{1}{\sqrt n}
=
\sqrt{\frac{2}{m}}
\le
\sqrt{\frac{2}{\alpha_0}}\,N^{-1/2}.
$$
Thus
\begin{equation}
\sup_{(\mu,\delta)\in\mathcal A_N(\kappa,K)}
\sup_{x\in\mathbb R}
\left|
P_{\mu,\delta}\left(
\frac{S_n-m\delta}{\sqrt{m(q-\delta^2)}}\le x
\right)-\Phi(x)
\right|
\le
C_{\kappa,\alpha_0}N^{-1/2}.
\label{eq:full_region_BE}
\end{equation}

Now define
$$
F_{N,\mu,\delta}(x)
:=
P_{\mu,\delta}\left(
\frac{S_n-m\delta}{\sqrt{m(q-\delta^2)}}\le x
\right).
$$
Then
$$
P_{\mu,\delta}(S_n\le0)
=
F_{N,\mu,\delta}\bigl(-t_N(\mu,\delta)\bigr),
\qquad
-t_N(\mu,\delta)
=
\frac{0-m\delta}{\sqrt{m(q-\delta^2)}}.
$$
Hence
$$
\begin{aligned}
&\sup_{(\mu,\delta)\in\mathcal A_N(\kappa,K)}
\left|
P_{\mu,\delta}(S_n\le0)-\Phi\bigl(-t_N(\mu,\delta)\bigr)
\right| \\
&\qquad=
\sup_{(\mu,\delta)\in\mathcal A_N(\kappa,K)}
\left|
F_{N,\mu,\delta}\bigl(-t_N(\mu,\delta)\bigr)
-
\Phi\bigl(-t_N(\mu,\delta)\bigr)
\right| \\
&\qquad\le
\sup_{(\mu,\delta)\in\mathcal A_N(\kappa,K)}
\sup_{x\in\mathbb R}
\left|
F_{N,\mu,\delta}(x)-\Phi(x)
\right| \le C_{\kappa,\alpha_0}N^{-1/2}.
\end{aligned}
$$
That is,
$$
P_{\mu,\delta}(S_n\le0)
=
\Phi\bigl(-t_N(\mu,\delta)\bigr)
+
O_{\kappa,\alpha_0}(N^{-1/2})
$$
uniformly over $(\mu,\delta)\in\mathcal A_N(\kappa,K)$.

\medskip
\noindent
\textbf{Step 3: Profile approximation for $\eta$.}

From the preceding Berry--Esseen bound, there exists a remainder
$R_N(\mu,\delta)$ such that
$$
P_{\mu,\delta}(S_n\le 0)
=
\Phi\bigl(-t_N(\mu,\delta)\bigr)
+
R_N(\mu,\delta),
$$
where
$$
\sup_{(\mu,\delta)\in\mathcal A_N(\kappa,K)}
|R_N(\mu,\delta)|
\le
C_{\kappa,\alpha_0}N^{-1/2}.
$$
Put
$$
P_{0,N}(\mu,\delta):=P_{\mu,\delta}(S_n=0).
$$
Then, by the definition of $e$,
$$
e(m,\mu+\delta,\mu-\delta)
=
\Phi\bigl(-t_N(\mu,\delta)\bigr)
-\frac12 P_{0,N}(\mu,\delta)
+
R_N(\mu,\delta).
$$
Substituting this expression into the exact identity in
Lemma~\ref{lem:full_region_exact_identity} gives
$$
\begin{aligned}
\eta(m,\mu+\delta,\mu-\delta)
&=
2e(m,\mu+\delta,\mu-\delta)
+
\delta(N-m-2)P_{0,N}(\mu,\delta) \\
&=
2\Phi\bigl(-t_N(\mu,\delta)\bigr)
+
\{\delta(N-m-2)-1\}P_{0,N}(\mu,\delta)
+
2R_N(\mu,\delta).
\end{aligned}
$$
Equivalently,
$$
\eta(m,\mu+\delta,\mu-\delta)
=
2\Phi\bigl(-t_N(\mu,\delta)\bigr)
+
\delta(N-m)P_{0,N}(\mu,\delta)
-
(2\delta+1)P_{0,N}(\mu,\delta)
+
2R_N(\mu,\delta).
$$

We now use the uniform local estimate from Step 1:
$$
P_{0,N}(\mu,\delta)
=
\frac{\phi\bigl(t_N(\mu,\delta)\bigr)}
{\sqrt{m(q-\delta^2)}}(1+o(1)),
$$
uniformly over $(\mu,\delta)\in\mathcal A_N(\kappa,K)$. Since
$$
\phi(t)\le \frac{1}{\sqrt{2\pi}}
\qquad (t\ge0),
\qquad
m(q-\delta^2)\ge \frac{\alpha_0q_\kappa}{2}\,N,
$$
we also have
$$
\sup_{(\mu,\delta)\in\mathcal A_N(\kappa,K)}
P_{0,N}(\mu,\delta)
\le
C_{\kappa,\alpha_0}N^{-1/2}
$$
for all sufficiently large $N$. Hence
$$
(2\delta+1)P_{0,N}(\mu,\delta)
=
o(1)
$$
uniformly, because $0\le \delta\le 1/2$. Moreover,
$$
R_N(\mu,\delta)=O_{\kappa,\alpha_0}(N^{-1/2})=o(1)
$$
uniformly. Therefore,
$$
\eta(m,\mu+\delta,\mu-\delta)
=
2\Phi\bigl(-t_N(\mu,\delta)\bigr)
+
\delta(N-m)
\frac{\phi\bigl(t_N(\mu,\delta)\bigr)}
{\sqrt{m(q-\delta^2)}}(1+o(1))
+
o(1),
$$
uniformly over $(\mu,\delta)\in\mathcal A_N(\kappa,K)$.

Finally, by the definition of $t_N(\mu,\delta)$,
$$
\frac{\delta}{\sqrt{m(q-\delta^2)}}
=
\frac{t_N(\mu,\delta)}{m}.
$$
Thus
$$
\delta(N-m)
\frac{\phi\bigl(t_N(\mu,\delta)\bigr)}
{\sqrt{m(q-\delta^2)}}
=
\frac{N-m}{m}\,
t_N(\mu,\delta)\phi\bigl(t_N(\mu,\delta)\bigr).
$$
Since
$$
\frac{N-m}{m}\le \frac{1}{\alpha_0},
\qquad
\sup_{t\ge0}t\phi(t)<\infty,
$$
the multiplicative $1+o(1)$ error in the local estimate contributes only an
additive $o(1)$ term uniformly. Consequently,
$$
\eta(m_N,\mu+\delta,\mu-\delta)
=
2\Phi\bigl(-t_N(\mu,\delta)\bigr)
+
\frac{N-m_N}{m_N}
t_N(\mu,\delta)\phi\bigl(t_N(\mu,\delta)\bigr)
+
o(1),
$$
uniformly over $(\mu,\delta)\in\mathcal A_N(\kappa,K)$. By the definition of
$f_{k_N}$, this is
$$
\eta(m_N,\mu+\delta,\mu-\delta)
=
f_{k_N}\bigl(t_N(\mu,\delta)\bigr)+o(1),
$$
uniformly over $(\mu,\delta)\in\mathcal A_N(\kappa,K)$.
This completes the proof.
\end{proof}

\subsection{Proof of Lemma~\ref{thm:bernoulli_normal_rule_of_thirds}}\label{sec:thm:bernoulli_normal_rule_of_thirds}
\begin{proof}
Let
$$
v:=v(\mu_1,\mu_0)=\mu_1(1-\mu_1)+\mu_0(1-\mu_0),
\qquad
t:=\frac{\sqrt{m}\,|\tau|}{\sqrt{2v}},
\qquad
k:=\frac{N-m}{m}.
$$
Under the normal approximation,
$$
\hat\mu_1-\hat\mu_0
\approx
N\left(\tau,\frac{2v}{m}\right),
$$
so the tie-adjusted rollout error probability is approximated by
$$
e^{\mathrm{NA}}(m,\mu_1,\mu_0)=\Phi(-t).
$$
Hence
$$
R^{\mathrm{NA}}(m,\mu_1,\mu_0)=(N-m)|\tau|\,\Phi(-t),
\qquad
C(m,\mu_1,\mu_0)=\frac{m}{2}|\tau|.
$$
Because
$$
\frac{\partial t}{\partial m}=\frac{t}{2m},
$$
we obtain
$$
\begin{aligned}
\eta^{\mathrm{NA}}(m,\mu_1,\mu_0)
&=
\frac{-\frac{\partial}{\partial m}R^{\mathrm{NA}}(m,\mu_1,\mu_0)}
{\frac{\partial}{\partial m}C(m,\mu_1,\mu_0)} \\
&=
2\Phi(-t)+\frac{N-m}{m}t\phi(t)
=:f_k(t).
\end{aligned}
$$

We next show that every $t\ge 0$ is attainable by some interior Bernoulli pair. Fix $t\ge 0$ and define
$$
\Delta_t:=\frac{t}{\sqrt{m+t^2}},
\qquad
(\mu_1,\mu_0)
=
\left(
\frac{1+\Delta_t}{2},
\frac{1-\Delta_t}{2}
\right).
$$
Then $(\mu_1,\mu_0)\in(0,1)^2$,
$$
\tau=\Delta_t,
\qquad
v=\frac{1-\Delta_t^2}{2},
$$
and therefore
$$
\frac{\sqrt{m}\,|\tau|}{\sqrt{2v}}
=
\frac{\sqrt{m}\,\Delta_t}{\sqrt{1-\Delta_t^2}}
=
t.
$$
Thus
$$
\sup_{(\mu_1,\mu_0)\in(0,1)^2}\eta^{\mathrm{NA}}(m,\mu_1,\mu_0)
=
\sup_{t\ge 0}f_k(t).
$$

Differentiating $f_k$ yields
$$
f_k'(t)
=
-2\phi(t)+k\{\phi(t)+t\phi'(t)\}
=
\phi(t)\bigl[-2+k(1-t^2)\bigr].
$$
If $k\le 2$, then
$$
-2+k(1-t^2)\le -2+k\le 0
\qquad
(t\ge 0),
$$
so $f_k$ is nonincreasing and
$$
\sup_{t\ge 0}f_k(t)=f_k(0)=1.
$$
If $k>2$, then $f_k'(0)=\phi(0)(k-2)>0$, while
$$
f_k'(t)=0
\quad\Longleftrightarrow\quad
t=\sqrt{1-\frac{2}{k}}=:t_k^\star.
$$
Hence $f_k$ increases on $[0,t_k^\star]$ and decreases on $[t_k^\star,\infty)$, so $t_k^\star$ is the unique interior maximizer. Since $f_k(0)=1$ and $f_k'(0)>0$, we have
$$
\sup_{t\ge 0}f_k(t)>1.
$$

Because
$$
k\le 2
\quad\Longleftrightarrow\quad
\frac{N-m}{m}\le 2
\quad\Longleftrightarrow\quad
m\ge \frac{N}{3},
$$
we conclude that
$$
\sup_{(\mu_1,\mu_0)\in(0,1)^2}\eta^{\mathrm{NA}}(m,\mu_1,\mu_0)\le 1
\quad\Longleftrightarrow\quad
m\ge \frac{N}{3}.
$$
This proves the rule of thirds. In the discrete matched-pairs problem, the cutoff is the smallest even integer not smaller than $N/3$.
\end{proof}

\subsection{Proof of Theorem~\ref{cor:profile_rule_of_thirds_from_uniform_gaussian}}\label{sec:cor:profile_rule_of_thirds_from_uniform_gaussian}
\begin{proof}
First, let
$$
k_N:=\frac{N-m_N}{m_N},
$$
then
$$
k_N\to k_\infty:=\frac{1-\alpha}{\alpha}.
$$

(i) When $\alpha<1/3$, we have $k_\infty>2$. Let
$$
t_\star:=\sqrt{1-\frac{2}{k_\infty}}>0.
$$
From the proof of Theorem~\ref{thm:bernoulli_normal_rule_of_thirds}, $f_{k_\infty}$ is uniquely maximized at $t_\star$, and in particular,
$$
f_{k_\infty}(t_\star)>1.
$$
Since $k\mapsto f_k(t_\star)$ is continuous, there exists a constant $c>0$ such that for all sufficiently large $N$,
$$
f_{k_N}(t_\star)\ge 1+2c
$$
holds.

Here, we set
$$
\mu_N:=\frac12,
\qquad
\delta_N:=\frac{t_\star}{2\sqrt{m_N+t_\star^2}}.
$$
Then
$$
\mu_N\in[\kappa,1-\kappa],
$$
and since $\delta_N>0$ and $\delta_N<1/2$, we have
$$
\mu_N\pm\delta_N\in[0,1].
$$
Furthermore,
$$
\delta_N
=
O(m_N^{-1/2})
=
O(N^{-1/2}),
$$
therefore, since
$$
\frac{\delta_N}{\sqrt{\log N/N}}
=
O\left(\frac{1}{\sqrt{\log N}}\right)\to 0,
$$
we have for all sufficiently large $N$,
$$
\delta_N\le K\sqrt{\frac{\log N}{N}}.
$$
Hence,
$$
(\mu_N,\delta_N)\in \mathcal A_N(\kappa,K).
$$

Next, since $\mu_N=1/2$, we have
$$
q=\mu_N(1-\mu_N)=\frac14,
$$
and because
$$
\delta_N^2=\frac{t_\star^2}{4(m_N+t_\star^2)},
$$
we obtain
$$
q-\delta_N^2
=
\frac14-\frac{t_\star^2}{4(m_N+t_\star^2)}
=
\frac{m_N}{4(m_N+t_\star^2)}.
$$
Therefore,
$$
\begin{aligned}
t_N(\mu_N,\delta_N)
&=
\frac{\sqrt{m_N}\,\delta_N}{\sqrt{q-\delta_N^2}} \\
&=
\frac{\sqrt{m_N}\,\dfrac{t_\star}{2\sqrt{m_N+t_\star^2}}}
{\sqrt{\dfrac{m_N}{4(m_N+t_\star^2)}}} \\
&=
t_\star.
\end{aligned}
$$
Thus, applying the uniform approximation of Theorem~\ref{thm:full_region_uniform_gaussian_approx},
$$
\eta(m_N,\mu+\delta,\mu-\delta)
=
f_{k_N}(t_N(\mu,\delta))+o(1),
$$
to $(\mu,\delta)=(\mu_N,\delta_N)$ yields
$$
\eta(m_N,\mu_N+\delta_N,\mu_N-\delta_N)
=
f_{k_N}(t_\star)+o(1)
\ge
1+c
$$
for all sufficiently large $N$. Consequently, since
$$
\sup_{(\mu,\delta)\in\mathcal A_N(\kappa,K)}
\eta(m_N,\mu+\delta,\mu-\delta)
\ge 1+c
$$
holds eventually, we obtain
$$
\liminf_{N\to\infty}
\sup_{(\mu,\delta)\in\mathcal A_N(\kappa,K)}
\eta(m_N,\mu+\delta,\mu-\delta)
>1.
$$

(ii) When $\alpha>1/3$, we have $k_\infty<2$, and thus for all sufficiently large $N$,
$$
k_N<2
$$
holds. In this case, from Theorem~\ref{thm:bernoulli_normal_rule_of_thirds}, we have
$$
\sup_{t\ge 0}f_{k_N}(t)=1.
$$

Now, from the uniform approximation in Theorem~\ref{thm:full_region_uniform_gaussian_approx}, there exists an $r_N\to 0$ such that
$$
\sup_{(\mu,\delta)\in\mathcal A_N(\kappa,K)}
\left|
\eta(m_N,\mu+\delta,\mu-\delta)-f_{k_N}(t_N(\mu,\delta))
\right|
\le r_N
$$
holds. Therefore,
$$
\begin{aligned}
\sup_{(\mu,\delta)\in\mathcal A_N(\kappa,K)}
\eta(m_N,\mu+\delta,\mu-\delta)
&\le
\sup_{(\mu,\delta)\in\mathcal A_N(\kappa,K)}
f_{k_N}(t_N(\mu,\delta))
+r_N \\
&\le
\sup_{t\ge 0}f_{k_N}(t)+r_N \\
&=
1+r_N.
\end{aligned}
$$
It follows that
$$
\limsup_{N\to\infty}
\sup_{(\mu,\delta)\in\mathcal A_N(\kappa,K)}
\eta(m_N,\mu+\delta,\mu-\delta)
\le 1.
$$
\end{proof}

\section{Proofs for Section~\ref{sec:gaussian-known-variance}}
\label{app:proofs-gaussian}

\subsection{Proof of Proposition~\ref{prop:gaussian-minimax-regret-treatment-rule}}\label{sec:prop:gaussian-minimax-regret-treatment-rule}
\begin{proof}
Write
$$
\tau:=\mu_1-\mu_0.
$$
Then
$$
\SReg(\delta,\mu_1,\mu_0)
=
(\mu_1-\mu_0)^+\E{1-\delta(\omega)}
+
(\mu_0-\mu_1)^+\E{\delta(\omega)}.
$$
Fix any $a>b$ and consider the mirror states
$$
s^+=(a,b),
\qquad
s^-=(b,a).
$$
Assign prior probability $1/2$ to each state and define the resulting Bayes regret by
$$
\bar R_{a,b}(\delta)
:=
\frac12\SReg(\delta,a,b)+\frac12\SReg(\delta,b,a).
$$
If $f_{\mu_1,\mu_0}(y)$ denotes the joint density of the realized data vector $y=(y_1,\dots,y_m)$ under state $(\mu_1,\mu_0)$, then
$$
\bar R_{a,b}(\delta)
=
\frac{a-b}{2}
\int
\Bigl[
f_{a,b}(y)\{1-\delta(y)\}+f_{b,a}(y)\delta(y)
\Bigr]\,dy.
$$
This integral is minimized pointwise by choosing
$$
\delta(y)=
\begin{cases}
1, & \text{if } f_{a,b}(y)>f_{b,a}(y),\\[3pt]
0, & \text{if } f_{a,b}(y)<f_{b,a}(y).
\end{cases}
$$
A direct calculation of the log-likelihood ratio gives
$$
\begin{aligned}
\log\frac{f_{a,b}(y)}{f_{b,a}(y)}
&=
\sum_{i=1}^{m/2}
\left(
-\frac{(y_i-a)^2}{2\sigma^2}
+
\frac{(y_i-b)^2}{2\sigma^2}
\right) \\
&\qquad
+
\sum_{i=m/2+1}^{m}
\left(
-\frac{(y_i-b)^2}{2\sigma^2}
+
\frac{(y_i-a)^2}{2\sigma^2}
\right) \\
&=
\frac{a-b}{\sigma^2}
\left(
\sum_{i=1}^{m/2}y_i-\sum_{i=m/2+1}^{m}y_i
\right).
\end{aligned}
$$
Hence
$$
f_{a,b}(y)>f_{b,a}(y)
\quad\Longleftrightarrow\quad
\hat\mu_1>\hat\mu_0.
$$
Therefore the Bayes rule for every mirror prior $(a,b)$, $(b,a)$ is the empirical success rule
$$
\delta^{ES}(\omega):=\mathbf{1}\{\hat\mu_1>\hat\mu_0\}.
$$
This implies
$$
\bar R_{a,b}(\delta)\ge \bar R_{a,b}(\delta^{ES})
\qquad
\text{for all }a>b\text{ and all }\delta.
$$

Now take suprema over mirror pairs. For every rule $\delta$,
$$
\sup_{(\mu_1,\mu_0)\in\R^2}\SReg(\delta,\mu_1,\mu_0)
\ge
\sup_{a>b}\bar R_{a,b}(\delta)
\ge
\sup_{a>b}\bar R_{a,b}(\delta^{ES}).
$$
Because $\delta^{ES}$ is symmetric under relabeling of the two arms,
$$
\SReg(\delta^{ES},a,b)=\SReg(\delta^{ES},b,a),
$$
so
$$
\bar R_{a,b}(\delta^{ES})=\SReg(\delta^{ES},a,b).
$$
Hence
$$
\sup_{a>b}\bar R_{a,b}(\delta^{ES})
=
\sup_{(\mu_1,\mu_0)\in\R^2}\SReg(\delta^{ES},\mu_1,\mu_0).
$$
Combining the inequalities gives
$$
\sup_{(\mu_1,\mu_0)\in\R^2}\SReg(\delta,\mu_1,\mu_0)
\ge
\sup_{(\mu_1,\mu_0)\in\R^2}\SReg(\delta^{ES},\mu_1,\mu_0),
$$
which proves that $\delta^{ES}$ is minimax.
\end{proof}

\subsection{Proof of Lemma~\ref{lem:gaussian_error_probability}}\label{sec:lem:gaussian_error_probability}
\begin{proof}
Each arm contains $m/2$ observations, so
$$
\hat\mu_1\sim N\left(\mu_1,\frac{2\sigma^2}{m}\right),
\qquad
\hat\mu_0\sim N\left(\mu_0,\frac{2\sigma^2}{m}\right),
$$
and the two sample means are independent. Therefore
$$
\hat\mu_1-\hat\mu_0
\sim
N\left(\mu_1-\mu_0,\frac{4\sigma^2}{m}\right)
=
N\left(\tau,\frac{4\sigma^2}{m}\right).
$$
If $\tau>0$, the rollout rule errs precisely when $\hat\mu_1<\hat\mu_0$, so
$$
e(m,\tau)
=
\Prob{\hat\mu_1-\hat\mu_0<0}
=
\Phi\left(-\frac{\sqrt{m}\,\tau}{2\sigma}\right).
$$
The case $\tau<0$ follows by symmetry, which yields
$$
e(m,\tau)=\Phi\left(-\frac{\sqrt{m}\,|\tau|}{2\sigma}\right).
$$
\end{proof}

\subsection{Proof of Theorem~\ref{thm:gaussian_safe_exact}}
\label{sec:thm:gaussian_safe_exact}

\begin{proof}
By symmetry it suffices to consider $\tau>0$. Lemma~\ref{lem:gaussian_error_probability} gives
$$
e(m,\tau)=\Phi(-t),
\qquad
t=\frac{\sqrt{m}\,\tau}{2\sigma}.
$$
Thus
$$
C(m,\tau)=\frac{m}{2}\tau,
\qquad
R(m,\tau)=(N-m)\tau\Phi(-t).
$$
Because
$$
\frac{\partial t}{\partial m}
=
\frac{\tau}{4\sigma\sqrt{m}}
=
\frac{t}{2m},
$$
we have
$$
\frac{\partial R}{\partial m}
=
-\tau\Phi(-t)
-
(N-m)\tau\phi(t)\frac{t}{2m}.
$$
Since
$$
\frac{\partial C}{\partial m}=\frac{\tau}{2},
$$
it follows that
$$
\eta(m,\tau)
=
\frac{-\frac{\partial}{\partial m}R(m,\tau)}
{\frac{\partial}{\partial m}C(m,\tau)}
=
2\Phi(-t)+\frac{N-m}{m}t\phi(t)
=
f_k(t).
$$

Differentiate $f_k$:
$$
f_k'(t)
=
-2\phi(t)+k\{\phi(t)+t\phi'(t)\}
=
\phi(t)\bigl[-2+k(1-t^2)\bigr].
$$
If $k\le 2$, then
$$
-2+k(1-t^2)\le -2+k\le 0
\qquad
(t\ge 0),
$$
so $f_k$ is nonincreasing on $[0,\infty)$ and
$$
\sup_{t\ge 0}f_k(t)=f_k(0)=1.
$$
If $k>2$, then $f_k'(0)=\phi(0)(k-2)>0$, while
$$
f_k'(t)=0
\quad\Longleftrightarrow\quad
t=\sqrt{1-\frac{2}{k}}=:t_k^\star.
$$
Hence $f_k$ increases on $[0,t_k^\star]$ and decreases on $[t_k^\star,\infty)$, so $t_k^\star$ is the unique interior maximizer. Since $f_k(0)=1$ and $f_k'(0)>0$, we have
$$
\sup_{t\ge 0}f_k(t)>1.
$$

Finally,
$$
k\le 2
\quad\Longleftrightarrow\quad
\frac{N-m}{m}\le 2
\quad\Longleftrightarrow\quad
m\ge \frac{N}{3}.
$$
Because
$$
\tau\mapsto t=\frac{\sqrt{m}\,|\tau|}{2\sigma}
$$
maps $[0,\infty)$ onto $[0,\infty)$, we conclude that
$$
\sup_{\tau\in\R}\eta(m,\tau)\le 1
\quad\Longleftrightarrow\quad
m\ge \frac{N}{3}.
$$
This proves the theorem.
\end{proof}

\end{document}